\renewcommand\footnotetextcopyrightpermission[1]{}
\newtheorem{definition}{Definition}
\newtheorem{theorem}{Theorem}
\newtheorem{proposition}{Proposition}
\renewcommand{\paragraph}[1]{\smallskip \noindent \textsf{\textbf{#1}}}
\DeclareMathOperator*{\argmin}{arg\,min}
\renewcommand{\qed}{\hfill $\framebox(6,6){}$}
\newcommand{\mech}{{M}\xspace}
\newcommand{\data}{\mathcal{D}\xspace}
\newcommand{\cthm}{Theorem~}
\newcommand{\cprop}{Proposition~}
\newcommand{\ie}{{\em i.e.}\xspace}
\newcommand{\eg}{{\em e.g.}\xspace}
\newcommand{\squishlist}{
	\begin{list}{$\bullet$}{
		\setlength{\itemsep}{0pt}
		\setlength{\parsep}{3pt}
		\setlength{\topsep}{3pt}
		\setlength{\partopsep}{0pt}
		\setlength{\leftmargin}{1.0em}
		\setlength{\labelwidth}{1em}
		\setlength{\labelsep}{0.5em}
   }
}
\newcommand{\squishenum}{
	
	\begin{list}{\usecounter{scount}}{
		\setlength{\itemsep}{0pt}
		\setlength{\parsep}{3pt}
		\setlength{\topsep}{3pt}
		\setlength{\partopsep}{0pt}
		\setlength{\leftmargin}{1.2em}
		\setlength{\labelwidth}{1em}
		\setlength{\labelsep}{0.5em}
	}
}
\newcommand{\squishend}{
	\end{list}
}
\newcommand{\ep}[2][]{{\bf E}_{#1}\hspace{-0.06cm}\left[#2\right]\xspace}
\newcommand{\bigoh}[1]{{\rm O}\!\left(#1\right)\xspace}
\newcommand{\bigohinline}[1]{{\rm O}(#1)\xspace}
\newcommand{\mdl}{{\rm MDL}\xspace}
\begin{document}

\title{A Pluggable Learned Index Method via \\ Sampling and Gap Insertion}

\author{Yaliang Li$^*$, Daoyuan Chen$^*$, Bolin Ding, Kai Zeng, and Jingren Zhou} 
\affiliation{Alibaba Group}
\affiliation{ \{yaliang.li, daoyuanchen.cdy, bolin.ding, zengkai.zk, jingren.zhou\}@alibaba-inc.com}
\thanks{$^*$The first two authors contributed equally to this work and are joint first authors.}

\begin{abstract}
Database indexes facilitate data retrieval and benefit broad applications in real-world systems. 
Recently, a new family of index, named learned index, is proposed to learn hidden yet useful data distribution and incorporate such information into the learning of indexes, which leads to promising performance improvements.
However, the ``learning'' process of learned indexes is still under-explored.
In this paper, we propose a formal machine learning based framework to quantify the \textit{index learning objective}, and study two general and pluggable techniques to enhance the \textit{learning efficiency} and \textit{learning effectiveness} for learned indexes. 
With the guidance of the formal learning objective, we can efficiently learn index by incorporating the proposed sampling technique, and learn precise index with enhanced generalization ability brought by the proposed result-driven gap insertion technique.

We conduct extensive experiments on real-world datasets and compare several indexing methods from the perspective of the index learning objective. The results show the ability of the proposed framework to help to design suitable indexes for different scenarios. 
Further, we demonstrate the effectiveness of the proposed sampling technique, which achieves up to 78x construction speedup while maintaining non-degraded indexing performance.
Finally, we show the gap insertion technique can enhance both the static and dynamic indexing performances of existing learned index methods with up to 1.59x query speedup.
We will release our codes and processed data for further study, which can enable more exploration of learned indexes from both the perspectives of machine learning and database.
\end{abstract}

\maketitle

\pagestyle{plain}

\section{Introduction}
\label{sec:intro}
Database indexes have been extensively studied in the database community for past decades, resulting in fruitful designed index methods and broad real-world applications, \eg, \cite{abel1984b+, bitmap1,Pagh2004cuckoo,wang2017survey}. The related topics become even more important in the era of big data as tremendous data are generating and collecting from numerous sources at every second. 
Recently, a new family of index, namely learned index, has attracted increasing attention due to its promising results in terms of both index size and query time \cite{kraska2018case,Ferragina:2020pgm,ding2019alex,galakatos2019fiting}.

The research direction of learned index is opened up by \cite{kraska2018case}, which regards the traditional indexes such as B+Tree as models that predict the actual location within a sorted array for a queried key. 
From this view, indexes can be machine learning models trained from data to be indexed, and the hidden distribution information of data (for example, patterns and regular trends) can be leveraged to optimize the ``indexing'' components of traditional indexes.
For example, to achieve small index size, the \textit{index storage layout} of B+Tree, i.e., the height-balanced tree, can be replaced by a small tree whose nodes are machine learning models with a few learnable parameters \cite{kraska2018case}, or a small B+Tree whose nodes maintain the learned parameters of a few piece-wise linear segments \cite{galakatos2019fiting} rather than the whole data.
To achieve fast query speed, the \textit {index querying algorithm} of B+Tree is changed from traversing internal nodes with multiple comparisons and branches to inferences of machine learning models with a few numeric calculations \cite{Ferragina:2020pgm, ding2019alex}.

However, the ``learning'' of learned indexes is still under-explored in existing learned index methods. First, how to compare different learned indexes with evaluation from the perspective of machine learning, such as learning objectives and capacity of chosen models?  For different data with varied hidden patterns and different resource constraints, a formal learning objective can help us to design and optimize learned index. 
Second, few existing learned index methods study the cost of index learning while the sizes of real-world data are heavily increasing. Can we reduce the learning cost of learned indexes to enhance their applicability for large-scale datasets? 
Third, existing learned index methods explore few about distributions of the data to be learned. Can we deeply explore the data distribution information to further improve the learned index performance? 

Motivated by these three questions, in this paper, we propose a formal machine learning based framework to measure the \textbf{index learning objective}, and study two general and pluggable learning-oriented techniques, \ie, \textit{sampling} to enhance the \textbf{learning efficiency}, and data redistribution via \textit{gap insertion} to improve the index performance from the view of \textbf{learning effectiveness}.

First, to formally quantify the learning objective and learning quality of learned indexes, we regard the learned indexes as encoding mechanisms that compress data information into learned models and measure the learned index using the minimum description length (MDL) principle \cite{grunwald2007minimum}. By formulating an MDL-based objective function as the learning objective, we connect two important concepts of machine learning, overfitting and underfitting, to the flexible learning of indexes. With the help of the index learning objectives, we discuss how to compare existing learned index methods and design suitable learned indexes for different scenarios.

Second, to reduce the learning cost and accelerate the index construction especially for large-scale datasets, we propose to learn index with sampled small subsets. By capturing the data distribution information hidden in the data, it is possible to learn the index with a small subset of data while achieving high learning effectiveness. We theoretically prove the feasibility of learning index with sampling, and provide the asymptotic guideline on how large the sample should be in order to learn an index having a comparable performance with the index learned on the whole dataset.

Last but not the least, to enhance the performance and generalization ability of off-the-shelf learned indexes with few extra efforts, we study what a data distribution can be beneficial to index learning, and propose a data re-distribution technique via gap insertion. 
Specifically, we design a result-driven strategy to estimate the gaps that should be inserted in a data-dependent manner, and propose a gap-based data layout and strategy to physically place keys on positions with gaps inserted. 
In comparison to the original distribution, the re-distributed data is easier to be learned and can boost the performance of static indexing operations.
Surprisingly, the result-driven gap insertion also enables us to handle dynamic indexing scenarios well, which is due to the reason that the estimated gap-inserted positions can be predictively reserved for possible inserted keys in the dynamic workloads.

We conduct comprehensive experiments on four wildly adopted real-world datasets. 
We compare several index methods \cite{abel1984b+,galakatos2019fiting,Ferragina:2020pgm,kraska2018case} from the perspective of the proposed MDL-based framework, and examine these methods in terms of learning objective, model regularization and model capacity, providing a new understanding of them.
Further, we apply the proposed two general learning-oriented techniques, \textit{sampling} and data re-distribution via \textit{gap insertion} into existing learned index methods. 
Promising improvements are achieved for both these two techniques: 
The proposed sampling technique achieves up to 78x construction speedup, meanwhile maintaining non-degraded query performance and reasonable prediction preciseness;
The gap insertion technique significantly improves the preciseness of learned indexes and achieves up to 1.59x query speedup over strong learned index baselines.
Finally, we also show that the learned indexes with gap insertion can achieve good performance on dynamic workloads. 

To summarize, we make the following contributions: 
(1) We propose an MDL-based framework that enables formal analysis of learning objectives and comparison of different learned indexes, and more importantly, the framework can guide us to design flexible learned indexes for various scenarios; 
(2) We propose a pluggable \textit{sampling} technique that can improve the learning efficiency of learned index, which is practical and useful for index construction acceleration, especially on large-scale datasets;
(3) We propose a pluggable technique, data re-distribution via \textit{gap insertion}, to better utilize the hidden distribution information of indexed data, which improves the preciseness and generalization ability of learned indexes;
(4) We conduct comprehensive experiments to verify the effectiveness of the proposed techniques, and we will release our codes and datasets to promote further studies.
With these contributions, we hope to better connect the database community with the machine learning community to enhance each other for the topic of learned indexes.

\section{Preliminary}
\label{sec:pre}

\paragraph{Indexes as Mechanisms}.
Let's first formalize the task of learning index from data: Given a database $\data$ with $n$ records (rows), assume that a {\em range index} structure will be built on a specific column $x$. For each record $i \in [n]$, the value of this column, $x_i$, is adopted as the {\em key}, and $y_i$ is the {\em position} where the record is stored (for the case of {\em primary indexes}), or the {\em position} of the pointer to the record (for the case of {\em secondary indexes}). $y_i$ can be interpreted as the position of $x_i$ in an array sorted by $x_i$, and to support range queries, a database to be indexed needs to satisfy the key-position monotonicity.

\begin{definition}[Key-position Monotonicity]
	\label{def:monotonicity}
	For a set of key-position pairs $\{(x_i,y_i)\}$, the key-position monotonicity means that, for any $x_i$ and $x_j$, $y_i < y_j$ iff $x_i < x_j$. 
\end{definition}

The task of learned index aims to train and learn a predictive {\em index mechanism} $M(y|x)$ from $\data$: $\mech$ takes a record's key $x$ as input and outputs a {\em predicated position} $\hat{y} \leftarrow \mech(x)$ in the sorted array for data access.
From this perspective, classic index structures such as B$+$ tree \cite{abel1984b+, jagadish2005idistance}, B$^*$ tree \cite{chang2000b} and Prefix B-tree \cite{bayer1977prefix} can be also regarded as such a mechanism designed by experts, which gives the exact position or the page number for a given key.

\paragraph{Prediction-Correction Decomposition}.
The learning of an index mechanism $M$ is essentially to approximate the joint distribution of keys and positions.
Ideally, the predicted position $\hat{y}$ should be exactly the same as the true position of a record $x$, i.e., $\hat{y} = y$. However, in general, $\hat{y}$ is different with $y$ since it is difficult to perfectly fit a real-world dataset containing complex patterns. 
Thus, the query process of $x$ can be decomposed into a \textit{``prediction''} step that gives prediction $\hat{y}$ based on $x$ using $M$, and a \textit{``correction''} step that finds the true position $y$ of indexed record based on $\hat{y}$.

For example, considering B-Tree index, it gives the page number $\hat{y}$ where the record $x$ is located in, and requires a further page scan to get the exact position of this particular record.
For learned indexes, after getting the predicted position $\hat y$ through machine learning model inference, we also need to conduct, \eg, a binary/exponential search around $\hat y$ to find the true position $y$ where the record is stored. The cost of this \textit{``correction''} step depends on the difference $|\hat{y} - y|$.
Indeed, one goal is to minimize the difference between $\hat{y}$ and $y$. 
But the difference can be non-zero, which provides the flexibility to adjust the cost of \textit{``prediction''} step for a mechanism, \ie, how much space we need to store $\mech$ and how much time we need to calculate $\mech(x)$. 
This raises several questions: can we evaluate a learned index from the costs of these two decomposed steps? And what roles do these two steps play in the index learning?

\section{An MDL-based Framework}
\label{sec:framework}
In this section, we formally quantify the index mechanisms using the Minimum Description Length (MDL) principle  \cite{grunwald2007minimum, grunwald2019minimum}.
We first connect the above prediction-correction decomposition of index mechanisms with MDL, then formulate the learning objective of index mechanisms with two general terms that can encode various indexing criteria, and finally discuss the physical meanings of the proposed concepts with several instantiations of index mechanisms.

\subsection{Measuring Mechanism with Minimum Description Length}
The idea of MDL is to regard both the given data $\data$ and a mechanism $M$ as codes, and regard the learning as appropriately encoding or compressing the data $\data$ using $M$.
Then we can use the \textit{code length} or \textit{description length} to measure the ``simplicity'' of a mechanism: The more we learn the hidden patterns in the data and reduce its redundancy, the shorter the description lengths for the compressed data and the learned mechanism.
Specifically, the description length of a mechanism $M$ is decomposed into $L(M)$ and $L (\data|M)$: $L(M)$ indicates the description length of the mechanism itself, and $L (\data|M)$ indicates the conditional description length of $\data$ given $M$, which can be interpreted as how many extra bits do we need to exactly describe $\data$ using $M$. 
In the context of learned index, we can link $L(M)$ and $L(\data|M)$ to the \textit{prediction} and \textit{correction} respectively. 

Formally, in this paper, we will use the MDL criteria to measure the quality of an index mechanism $\mech$ as  $\mdl(M, \data)$ $= L(\mech) + L (\data|\mech)$. To be more specific, the two terms:
\begin{itemize}
	\item $L(\mech)$ measures the cost of using $\mech$ for prediction, \ie, getting the predicted position $\hat{y}$ based on the key $x$, which is usually proportional to the size of $\mech$;
	\item $L(\data|\mech) = \mathbb{E}_{\{x,y\}\in \data}L(y, \hat{y})$ measures the average cost of getting the true position $y$ based on the predicted position $\hat{y}$, in the position correction step.
\end{itemize}

\paragraph{Two Example Instantiations.}
To better illustrate the idea of MDL, let's consider two instantiations of mechanism $\mech$, classic B-tree and polynomial function. 
For a classic B-tree, $L(\mech)$ denotes the cost of traversing index tree, which is usually proportional to the height of the index tree $h$, i.e., $L(\mech) \propto h$. The concrete form of $L(\mech) = f(h)$ depends on several implementation details of B-tree, such as the maximum number of children a node can have, or the minimum number of children an internal (non-root) node can have. The second term $L (\data|\mech)$ denotes the cost of page scan in the leaf node, which is a function of page size $s$, i.e., $L (\data|\mech) = f(s)$. When binary scan is adopted, $f(s)$ is a linear function of $\log(s)$. 

For the second example, let's consider the case that a polynomial function is learned to act as the mechanism, that is $\mech(x) = \sum_{t=0}^{T}a_t \cdot x^t + c_t$, where $a_t$ and $c_t$ are learnable parameters for a corresponding degree $t$, and $T$ is the highest degree. In this example, $L(\mech)$ can be measured as $\bigoh T$, because $\bigoh T$ multiplications are needed to get the predicted position $\hat{y}$ based on the key $x$. For the second term in MDL, $L(\data|\mech) = \mathbb{E}_{\{x,y\}\in \data}L(y, \hat{y}) = \mathbb{E}_{\{x,y\}\in \data} (\log(|y-\hat{y}|) + 1)$ if binary search is adopted for the correction step.

\subsection{Objective Function for Learned Indexes}
From the above discussion of two instantiations, we can see the MDL provides a criterion to formally analyze and compare different indexes, including both classic B-tree and machine learning based indexes. This enables us to design suitable index structures for various scenarios. 
Formally, given the data $\data$ and a family of possible mechanisms $\mathcal{\mech}$, the process of learning index from data can be formulated as to find the best mechanism $\mech^*$ that minimizes the description length as follows:
\begin{equation}
\mech^* = \argmin_{\mech \in \mathcal{\mech}} \mdl(M, \data) = \argmin_{\mech \in \mathcal{\mech}} \left( L(\mech) + \alpha L ({\mathcal D}|\mech) \right),
\label{eq:overall}
\end{equation}
where $\alpha$ is a coefficient to balance the two terms in MDL.
From the view of machine learning, $\data$ is fed into the learning procedure stated in Equation (\ref{eq:overall}) and the description length $(M,\data)$ acts as the objective function to be minimized. 
By fitting the dataset, the learning procedure finds an optimal mechanism $\mech$ to predict the position $y$ based on given $x$. 
If a mechanism achieves minimum description length, it learns and stores the underlying key-position distribution information about data in the most compact form.  
For example, given a toy data $\data = \{(2, 1), (4, 2), (5, 3), (6, 4), (8, 5)\}$, the learned index can represent it as $\mech(x) = \text{Round}(0.7x - 0.5)$. Instead of storing the key-position pairs, only parameters of $\mech$ are needed to store, which reduces the redundancy of raw data and compresses the information to store.
Although the distributions of real-world data are not so simple, various patterns can be mined by machine learning methods, and be expressed in compact forms.

In Equation (\ref{eq:overall}), three factors $L(M)$, $L(D|M)$ and $\alpha$ act as the knobs to tune the performance of learned indexes, and next we will give more discussion about these factors.

\paragraph{Choice of $L(M)$ and $L(D|M)$.}
To learn an index mechanism from data, the first step is to choose the concrete forms of these two terms in the above framework. As the list of keys and corresponding positions $\{(x, y)\}$ are given and the underlying physical storage format is fixed, we can first choose a specific search method for the correction step, and thus the concrete form of $L (\data|\mech) = \mathbb{E}_{\{x,y\}\in \data}L(y, \hat{y})$ can be determined.
For $L(\mech)$, it has more flexibility. It can be set as the number of operations to calculate $\mech(x)$,  the number of model parameters in $\mech$, or the summarization of the $p$-norm of all model parameters, etc. The choice of $L(\mech)$ can be made by considering the requirements of database systems and the constraints of computation resources (\eg, on-disk or in-memory) to fit various scenarios. 
For example, existing learned index method PGM \cite{Ferragina:2020pgm} implicitly adopts $L (\data|\mech)=log(|\hat{y}-y)|)+1$ since it uses a binary search, and $L (\mech)$ to be the number of learned model parameters as it uses an optimal linear segmentation learning algorithm.

\paragraph{Overfitting v.s. Underfitting.}
When learning to build the index from data, the coefficient $\alpha$  in Equation (\ref{eq:overall}) plays an important role. From the perspective of machine learning,  $ L(D|M) = \mathop{\mathbb{E}}_{\{x,y\} \in D} L(y, \hat{y})$ measures the prediction loss on training data $D$, while $ L(M)$ is  the regularization term of learned model. These two terms are usually contradicted: We can learn a very complex model M to make the prediction loss on training data zero or close to zero, which leads to a small $ L(D|M)$ while a large $L(M)$, the so-called overfitting phenomenon \cite{bishop2006pattern,shalev2014understanding}; On the other hand, if we learn a simple model, $L(M)$ is small while  $L(D|M)$ is large as the model is too simple to make precious predictions, the so-called underfitting phenomenon. 
The coefficient $\alpha$ makes a trade-off between these two terms and aims to learn an index mechanism having the minimum description length, i.e., a relatively simple model $M$ while also having a small prediction loss on training data. 

\paragraph{$\alpha$ in Existing Index Methods.}
\label{sec:alpha-in-MDL}
For existing index methods, there are some tunable parameters playing the role of $\alpha$ to adjust the $L(M)$ term and $L(D|M)$ term in MDL.
For example, the page size of B+ Tree, the tree depth and the layer width of RMI \cite{kraska2018case}, and the error bound $\epsilon$ of FITting-Tree \cite{galakatos2019fiting} and PGM \cite{Ferragina:2020pgm}. These parameters are to be tuned for a given $\data$ and reflect the degree we want to fit $\data$:
With a small page size, a deep and wide model tree, and a small $\epsilon$, these index mechanisms will substantially gain small perdition errors $L(\data|M)$ and usually large $L(M)$ such as large index sizes or long prediction times.
In other words, these parameters implicitly act as the regularization factors in index learning.
In the experiments (Section \ref{exp:trade-off-MDL}), we will investigate the trade-offs of several index mechanisms with different $\alpha$s in more detail.

\section{Learning Index With Sampling}
\label{sec:sample}
As formulated above, we aim to choose $\mech$ from a candidate family ${\mathcal M}$ to minimize the objective function $L(\mech) + \alpha L({\mathcal D}|\mech)$, for a given dataset ${\mathcal D}$. 
When $|\mathcal D|$ is large, it is an expensive learning task. 
In fact, it is expensive just to evaluate the loss $L({\mathcal D}|\mech)$. 
In this section, we will investigate a computationally efficient solution via sampling.

\subsection{Accelerating Index Construction}
Recall that the cost of the correction step (corresponding to the term $L({\mathcal D}|\mech)$ in the MDL objective function) is proportional to $\log|y - M(x)| = \log|y - \hat{y}|$ when binary search is adopted (refer to above \emph{Two Example Instantiations}). Let's focus on the objective function in the following form:
\[
L(\data|\mech) = \ep[(x,y) \in \data]{L(y, \mech(x))} = \ep[(x,y) \in \data]{\log|y-\mech(x)|}.
\]
Let $\mathcal{E}$ be the maximum absolute prediction error, i.e., the maximum absolute value of all differences between predicated positions and true positions: $\forall (x, y) \in \data, ~|M(x) - y| \leq \mathcal{E}$. 
We have such an upper bound $L(y, \mech(x)) $ $\leq \log \mathcal{E}$, for any $(x,y)\in \data$ and a binary or an exponential search. Thus, we can show that a small sample from $\mathcal D$ suffices to approximate $L(\data|\mech)$:
\begin{proposition}
	\label{prop:sample_loss}
	Given that $L(y, \mech(x)) \leq {\log \mathcal{E}}$, we draw a random sample $\data_s$ from $\mathcal D$ with $|\data_s| = n_s$. For any candidate mechanism $\mech$, we can estimate $L(\data|\mech)$ using 
	\[
	{L}(\data_s|\mech) = \frac{1}{n_s} \sum_{(x,y) \in \data_s} L(y, \mech(x)),
	\]
	s.t., with probability as most $1-\delta$, we have
	\[
	\big|{L}(\data_s|\mech) - L(\data|\mech)\big| \leq \frac{\log \mathcal{E}}{\sqrt{2n_s}} \cdot \sqrt{\log\frac{2}{\delta}}.
	\]
\end{proposition}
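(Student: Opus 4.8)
The plan is to recognize this as a direct application of a standard concentration inequality for bounded, independent random variables, namely Hoeffding's inequality. First I would set up the right random variables: for each sampled record $(x,y) \in \data_s$ write $Z = L(y, \mech(x))$ for its per-record correction cost. Under the random-sampling model each such $Z$ is an i.i.d. draw of a uniformly chosen record's cost, and since the population quantity $L(\data|\mech) = \mathbb{E}_{(x,y)\in\data} L(y,\mech(x))$ is exactly the expectation of this cost over a uniform record, we get $\mathbb{E}[Z] = L(\data|\mech)$. Hence the sample average $L(\data_s|\mech) = \frac{1}{n_s}\sum_{(x,y)\in\data_s} L(y,\mech(x))$ is an unbiased estimator of $L(\data|\mech)$, and the claim is purely a statement about how tightly this average concentrates around its mean.

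The second ingredient is boundedness. By the hypothesis $L(y,\mech(x)) \le \log\mathcal{E}$ together with the fact that the correction cost is nonnegative, every $Z$ lies in an interval of width $\log\mathcal{E}$ (that is, $Z \in [0,\log\mathcal{E}]$). This is precisely what Hoeffding's inequality requires. Applying it to the $n_s$ i.i.d. variables $Z$ yields
\[
\pr{\big|L(\data_s|\mech) - L(\data|\mech)\big| \ge t} \le 2\exp\!\left(-\frac{2 n_s t^2}{(\log\mathcal{E})^2}\right).
\]
I would then set the right-hand side equal to $\delta$ and solve for $t$, giving $t = \frac{\log\mathcal{E}}{\sqrt{2 n_s}}\sqrt{\log(2/\delta)}$; taking complements turns the tail bound into the stated guarantee that holds with probability at least $1-\delta$. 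This matching of constants is routine and reproduces the displayed expression exactly.

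The main thing to be careful about, rather than a deep obstacle, is justifying the probabilistic model. If $\data_s$ is drawn with replacement the $Z$'s are genuinely i.i.d.\ and Hoeffding applies verbatim; if it is drawn without replacement one should instead invoke the Hoeffding--Serfling variant, or simply note that sampling without replacement concentrates at least as sharply as sampling with replacement, so the same bound holds. A secondary point is to fix the lower endpoint of the range of $Z$ at $0$ so that the interval width entering the exponent is exactly $\log\mathcal{E}$ rather than something larger; this follows from the nonnegativity of the binary/exponential search cost. Beyond these modeling remarks, the proof is essentially the choice of the correct inequality plus the algebra of inverting the exponential tail to express $t$ in terms of $\delta$.
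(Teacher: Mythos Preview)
Your proposal is correct and matches the paper's proof essentially verbatim: the paper also observes that $\ep{L(\data_s|\mech)} = L(\data|\mech)$, notes that $L(y,\mech(x)) \in [0,\log\mathcal{E}]$, and then invokes Hoeffding's inequality. Your additional care about sampling with versus without replacement and fixing the lower endpoint at $0$ goes slightly beyond what the paper spells out, but the core argument is identical.
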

\begin{proof}
	Here we provide a proof sketch due to the space limitation. Let's consider the random sample $\{(x,y) \in \data_s\}$. Indeed, we have
	\[
	\ep{{L}(\data_s|\mech)} = \frac{1}{n_s} \sum_{(x,y) \in \data_s} \ep{L(y, \mech(x))} = L({\mathcal D}|\mech).
	\]
	Since $L(y, \mech(x)) \in [0, \log \mathcal{E}]$, we apply Hoeffding's inequality to finish the proof.
\end{proof}

We can interpret $L(\data|\mech)$ as the expected cost of the correction step. Theoretically, we only need to estimate it with an error up to a constant factor (\eg, no more than the size of a page). To this end, we only need to draw a small sample.

\begin{theorem}
	\label{thm:loss_approx}
	Consider the optimization problem: 
	\[
	\mech^* = \argmin_{\mech \in \mathcal{\mech}} \mdl(\mech, {\mathcal D}) = \argmin_{\mech \in \mathcal{\mech}} \left( L(\mech) + \alpha L ({\mathcal D}|\mech) \right).
	\]
	We can solve it on a random sample $D_s$ with size $s = \bigohinline{\alpha^2 \log^2 \mathcal{E}}$ as
	\[
	\hat\mech^* = \argmin_{\mech \in \mathcal{\mech}} \mdl(\mech, \data_s)
	\]
	s.t., $\mdl(\hat\mech^*, {\mathcal D}) \leq \mdl(\mech^*, {\mathcal D}) + \bigoh{1}$ with high probability.
\end{theorem}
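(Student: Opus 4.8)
The plan is to run the standard empirical-risk-minimization sandwich, exploiting the fact that the regularization term $L(\mech)$ is a property of the mechanism alone and is therefore identical whether it is evaluated against $\data$ or against the sample $\data_s$; only the correction term $L(\cdot\,|\mech)$ varies with the data. For any single fixed mechanism the full and sampled objectives thus differ by exactly $\alpha\big(L(\data|\mech) - L(\data_s|\mech)\big)$, a quantity that \cprop\ref{prop:sample_loss} already controls. The goal is to transfer that pointwise control into a statement about the minimizers themselves.

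Next I would chain the inequalities starting from $\hat\mech^*$. Writing
\[
\mdl(\hat\mech^*,\data) = \mdl(\hat\mech^*,\data_s) + \alpha\big(L(\data|\hat\mech^*) - L(\data_s|\hat\mech^*)\big),
\]
then using the optimality $\mdl(\hat\mech^*,\data_s) \leq \mdl(\mech^*,\data_s)$ and expanding $\mdl(\mech^*,\data_s)$ back to the full objective with the same deviation bound, one arrives at
\[
\mdl(\hat\mech^*,\data) \;\leq\; \mdl(\mech^*,\data) \;+\; 2\alpha \sup_{\mech \in \mathcal{\mech}} \big| L(\data_s|\mech) - L(\data|\mech)\big|.
\]
It then remains only to force the uniform deviation below a constant multiple of $1/\alpha$: substituting $s = \bigohinline{\alpha^2 \log^2 \mathcal{E}}$ into the per-mechanism bound $\frac{\log\mathcal{E}}{\sqrt{2 n_s}}\sqrt{\log(2/\delta)}$ of \cprop\ref{prop:sample_loss} makes $2\alpha$ times this quantity $\bigohinline{1}$, which is exactly the claimed additive slack.

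The main obstacle is the supremum over $\mathcal{\mech}$. \cprop\ref{prop:sample_loss} is a single-hypothesis Hoeffding bound, but $\hat\mech^*$ is selected using $\data_s$ and is hence data-dependent, so the pointwise concentration cannot be applied to it directly. To upgrade the pointwise bound to the uniform deviation above, I would take a union bound over $\mathcal{\mech}$ when the candidate family is finite — replacing $\delta$ by $\delta/|\mathcal{\mech}|$, which costs only an extra $\sqrt{\log|\mathcal{\mech}|}$ factor that is absorbed into the constant — or invoke a covering-number or VC-type argument when $\mathcal{\mech}$ is infinite. The clean statement $s = \bigohinline{\alpha^2 \log^2 \mathcal{E}}$ then follows by treating this complexity term together with $\log(1/\delta)$ as constants folded into the ``with high probability'' qualifier; making the dependence on the richness of $\mathcal{\mech}$ explicit is where the genuine effort lies.
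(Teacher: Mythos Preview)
Your approach is essentially identical to the paper's: the same add-and-subtract (ERM sandwich) decomposition, the same use of the optimality of $\hat\mech^*$ on $\data_s$ to kill the middle term, and the same invocation of \cprop\ref{prop:sample_loss} on the two resulting deviation terms, followed by the arithmetic that $\alpha \cdot \tfrac{\log\mathcal{E}}{\sqrt{n_s}} = \bigohinline{1}$ once $n_s = \bigohinline{\alpha^2\log^2\mathcal{E}}$. If anything you are more careful than the paper, which simply applies \cprop\ref{prop:sample_loss} to both $\mech^*$ and $\hat\mech^*$ without commenting on the data-dependence of the latter; your flagging of the need for a uniform bound (via a union bound or covering argument over $\mathcal{\mech}$) is a genuine technical point that the paper elides.
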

\begin{proof}
	Indeed, we have
	\[
	\begin{aligned}
	& ~\mdl(\hat\mech^*, {\mathcal D}) - \mdl(\mech^*, {\mathcal D})
	\\
	= & ~\mdl(\hat\mech^*, {\mathcal D}) - \mdl(\hat\mech^*, S) + ~\mdl(\hat\mech^*, \data_s) 
	\\
	& - \mdl(\mech^*, \data_s) + ~\mdl(\mech^*, \data_s) - \mdl(\mech^*, {\mathcal D})
	\\
	\leq & ~|{L}(\data|\hat\mech^*) - L(\data_s|\hat\mech^*)| + 0 + |{L}(\data|\mech^*) - L(\data_s|\mech^*)|,
	\end{aligned}
	\]
	where $\mdl(\hat\mech^*, \data_s) - \mdl(\mech^*, \data_s) \leq 0$ if from the optimality of $\hat\mech^*$. Using \cprop\ref{prop:sample_loss}, both terms $|{L}(\data|\hat\mech^*) - L(\data_s|\hat\mech^*)|$ and $|{L}(\data|\mech^*) - L(\data_s|\mech^*)|$ can be bounded to complete the proof.
\end{proof}

We can interpret \cthm\ref{thm:loss_approx} as follows. For the purpose of minimizing the MDL function considered in Equation (\ref{eq:overall}), it suffices to learn the index $\mech$ on a small random sample with size as small as $\bigoh{\alpha^2\log^2 \mathcal{E}}$. 
Further, the parameter $\alpha$ controls how much we weigh the cost of the correction step in relative to the size of the index $L(\mech)$ in our goal. The larger the weight is, the larger sample we need to draw in order to learn the index; in other words, more details about the data are needed to learn an index in finer granularity. It should be noticed that there must be some constants hidden in the big O notation, which does matter in practice; however, the above theorem enables us to speed up index learning with small samples, and serves as an asymptotic guideline on how large the sample needs to be.  We will show that the proposed sampling technique can achieve significant construction speedup while maintaining comparable query performance in experiments (Section \ref{exp:sample}).

\subsection{Discussion}
\label{sec:sample-disscuss}
\paragraph{$\mathcal{E}$ in Existing Learned Index Methods}.
Now let's examine the maximum prediction error $\mathcal{E}$ for several existing learned index methods. 
FITing-Tree \cite{galakatos2019fiting} and PGM \cite{Ferragina:2020pgm} introduce an error bound $\epsilon$ to limit the maximum absolute difference between the actual and predicted position of any key, and thus $\mathcal{E}$ can be set as $\mathcal{E} = \epsilon$.
For RMI \cite{kraska2018case}, recall that the maximum prediction errors, $(y'-y)$ where $y'$ can be larger or less than $y$, are recorded during the training, and thus $\mathcal{E}$ can be bounded by the maximum value of them: $\mathcal{E}=max(|max\_positive\_error|,$ $|min\_negative\_error|)$. 
For these existing learned index methods, we can see that $\mathcal{E}$s are usually far smaller than $|\data|$, indicating that we can learn an index mechanism efficiently via sampling a small subset of $\data$.
In experiments (Section \ref{exp:sample}), we will see that these three learned index methods require different numbers of samples to achieve non-degraded performances and their numbers of samples meet our asymptotic analysis $\bigoh{\alpha^2\log^2 \mathcal{E}}$ with different $\alpha$s and different $\mathcal{E}$s.

\paragraph{Unseen Keys in Sampling}.
Interestingly enough, the sampling technique can bridge two indexing scenarios, static indexing and dynamic indexing, which correspond to two machine learning concepts, overfitting and generalization.
In static indexing scenarios, we only need to consider how to overfit $\data$ since all $\{x_i, y_i\}$ to be indexed can be accessed during the learning. 
However, when learning on a sampled subset data $\data_s$, the learned index mechanism $M$ has to consider the model generalization ability to precisely predict positions for \textbf{unseen data} $\data_s^-=\data - \data_s$. While for dynamic indexing scenarios including possible inserted data, similarly, a learned index mechanism $M$ needs to be generalized to unseen new keys.
This inspires us that if we can learn an indexing mechanism having good generalization ability from a small sampled data $\data_s$, we will be able to learn a dynamic index that has good generalization ability and can handle inserted keys.
Thus, as shown in the example of Figure \ref{fig:sample_demo}, besides index construction speedup, the sampling technique also improves the model generalization ability.
This seems contradicted with machine learning theory: usually the generation bound can be improved when we have more data, not fewer data. However, for learned index, the data pairs are not independent and a small sample can be enough to learn the correlation between keys and their positions. Moreover, the sampling technique can help to learn the underlying distribution of $\data$ better by extracting more general patterns (red segments in Figure \ref{fig:sample_demo}) from $\data_s$ and alleviate the overfitting (blue segments in Figure \ref{fig:sample_demo}) caused by some noisy keys in the unseen $\data_s^-$.

\begin{figure}[htbp]
	\centering
	\subfigure[Sampling]{
		\begin{minipage}{0.2\textwidth}
		\centering
		\includegraphics[width=\textwidth]{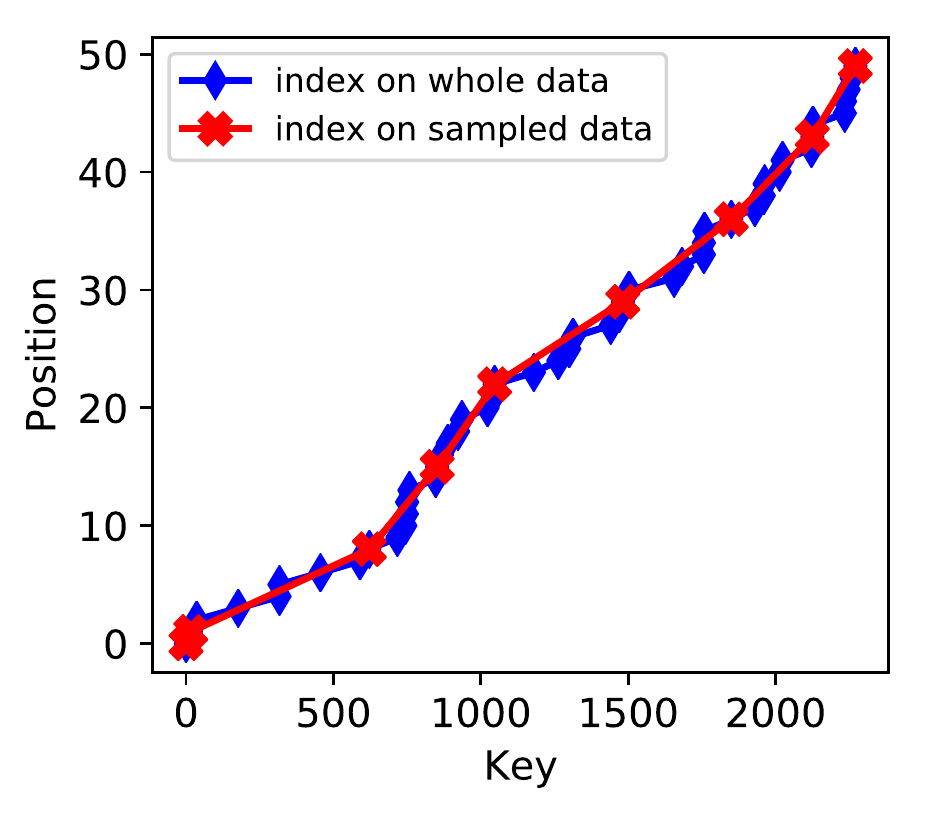}
		\label{fig:sample_demo}
		\vspace{-0.12in}
		\end{minipage}
	}
	\subfigure[Gap Insertion]{
		\begin{minipage}{0.2\textwidth}
		\centering
		\includegraphics[width=\textwidth]{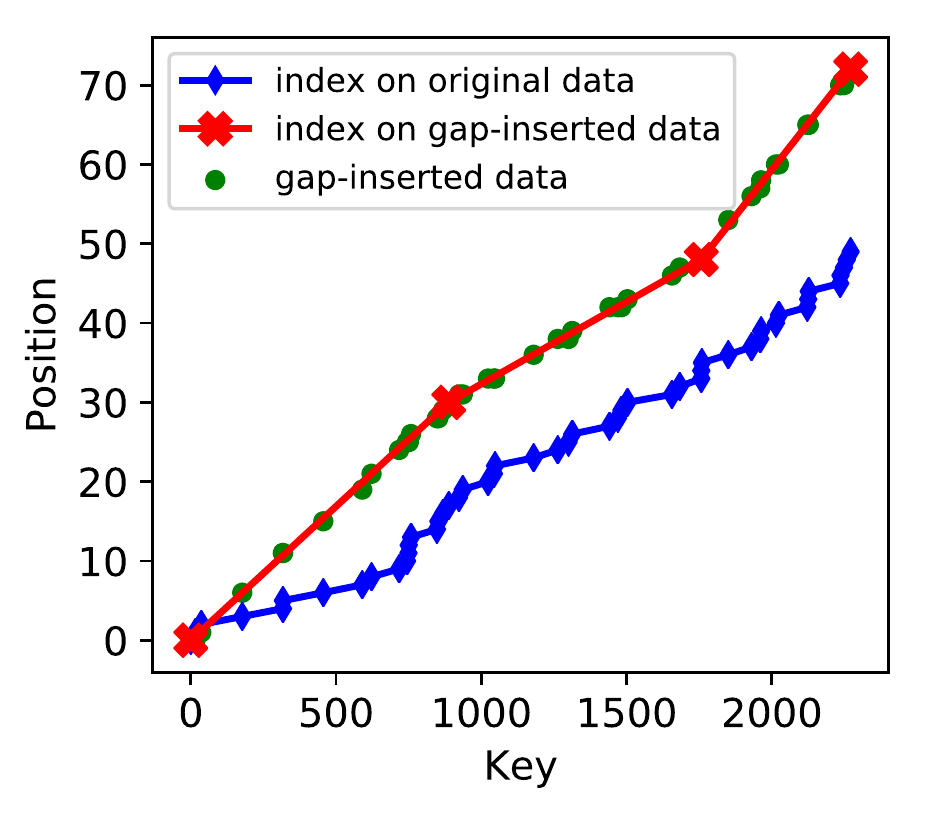}
		\label{fig:gap_demo}
		\vspace{-0.12in}
		\end{minipage}
	}
	\vspace{-0.2in}
	\caption{Examples of learned index with sampling and gap insertion.}
\end{figure}
\section{Learning Index with Gap Insertion}
\label{sec:gap}

\subsection{Gap insertion}
As discussed above, the proposed sampling technique can somehow improve the generalization ability of learned index methods. Here we explore more along this direction: Can we further enhance the generalization ability of learned index methods and improve their preciseness?  
In the fruitful research work of index structure in the database community, there are some ideas about how to leave certain empty space for dynamic data, such as Packed Memory Array \cite{bender2007adaptive} or Gapped Array \cite{ding2019alex}. 
Actually, we can further leverage the empty space to re-distribute the data such that the updated key-position distribution is easy to be learned and the generalization ability of index can be further enhanced.

Let's consider the example in Figure \ref{fig:gap_demo}: Can we insert some gaps, move the original data (\ie, blue dots) to the gap-inserted data (\ie, green dots), and learn the index as red segments? 
If so, the newly learned index (\ie, red segments) fits the gap-inserted data better and has better generalization ability. 
However, we should note that the number of inserted gaps cannot be very large since the gaps can increase storage and query costs, which are also part of the optimization goals. 

\paragraph{Result-Driven Gap Insertion}.
The gap insertion can be formulated as a \textit{position manipulation} problem: given a set of data $\data = \{x_i, y_i\}_{i=1}^{n}$ with the monotonicity of key-position, how to insert gaps and adjust each record's position from $y_i$ to $y_i^g$ while preserving the monotonicity of key-position, such that the preciseness of learned mechanism can be improved? 
Formally, let $\data_g=\{x_i, y_i^g\}_{i=1}^{n}$ be a gap-inserted data and $u_i=y_i^g-y_{i-1}^g-1$ be the number of inserted gaps between $x_{i-1}$ and $x_i$. Then $y_i^g=y_i+\sum_{1 \leq j \leq i}u_j$ and we aim to find a manipulated $\data_g$ that maximizes the objective:
\begin{equation}
\begin{aligned}
\max_{\{y_i^g\}}   &\sum_{\{x_i,y_i\}\in \data} L(y_i, M^*_{\data} (x_i)) - L(y^g_i, M^*_{\data_g}(x_i)) \\
\text{s.t.} ~ &\forall i:~ u_i \in \mathbb Z_{\ge 0}, ~~  \text{and} ~~ \sum_{i=1}^n u_i \leq \rho \cdot n ~,  \\
&\forall i: y_i^g=y_i+\sum_{1 \leq j \leq i}u_j, \, \, \forall i, j: y_i^g < y_j^g, ~~\text{iff} ~~  x_i < x_j,
\label{equ:gap-obj}
\end{aligned}
\end{equation}
where $\rho$ indicates the gap ratio that defines at most how many gaps (\ie, $\rho \cdot n$) can be introduced, $M^*_\data$ and $M^*_{\data_g}$ indicate the optimal learned index mechanism from original data $\data$ and gap-inserted data $\data_g$ respectively.

Inspired by block coordinate gradient descent \cite{bertsekas1997nonlinear}, here we describe a two-step \textbf{result-driven} solution to solve the optimization problem in Equation (\ref{equ:gap-obj}): The solution first proposes a new index mechanism  $M'$ that can achieve better preciseness than the index mechanism on the original data, and then move records to $y^g$ that are as close as possible to the positions predicted by $M'$. 
In other words, as the new index mechanism $M'$ meets our optimization goal, i.e., reducing the prediction error, we can ``backward inference'' $y^g$ using the predictions from $M'$ and learn a mechanism $M_{\data_g}$ from the estimated gap-inserted data $\data_g$.

\begin{figure}[htbp]
	\centering
	\subfigure{
		\begin{minipage}[]{0.48\textwidth}
			\centering
			\includegraphics[width=\textwidth]{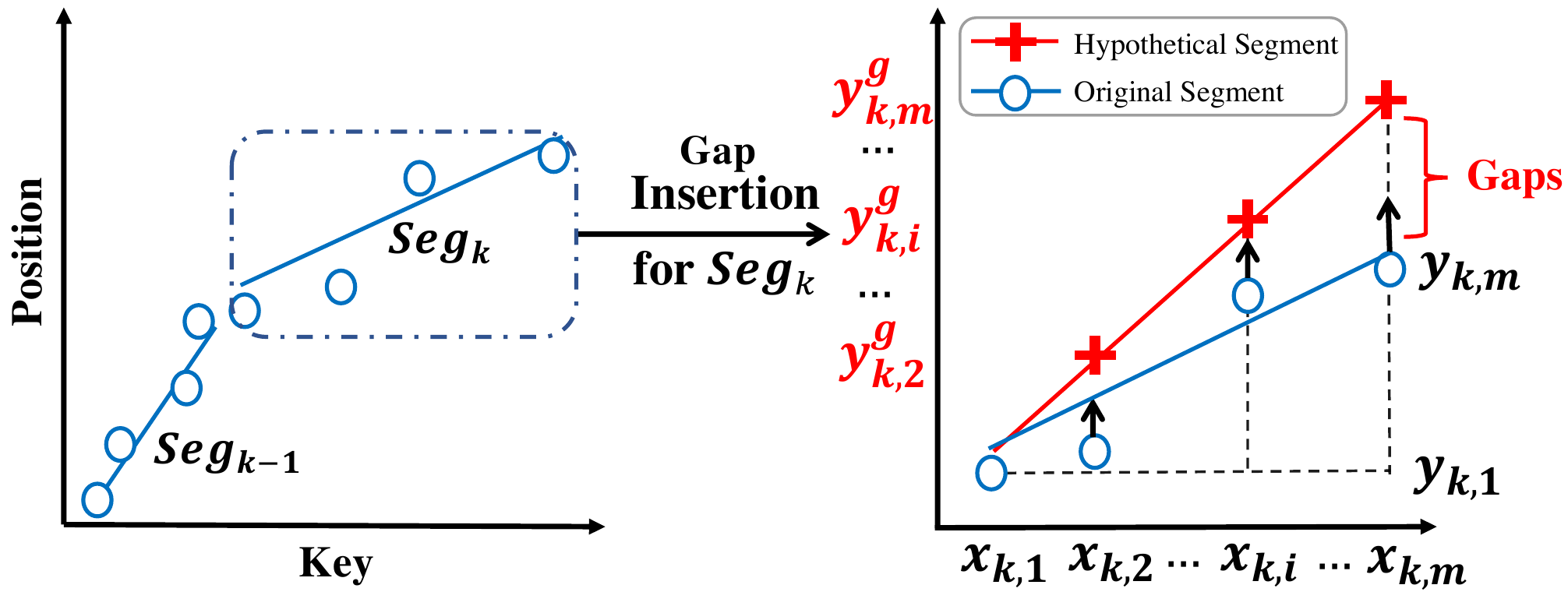}
		\end{minipage}
	}
	\vspace{-0.2in}
	\caption{The illustration of result-driven gap insertion.}
	\label{fig:gap_insert_demo}
\end{figure}

Specifically, since the linear models are adopted in existing learned index methods including RMI, FITing-Tree and PGM, here we discuss how to manipulate the positions for linear models based on the proposed result-driven solution. 
Considering that real-world datasets usually cannot be indexed with only one linear model, we first learn an index mechanism with $K$ linear segments to globally split the data, then locally insert gaps for each linear segment.
Let's consider the example in Figure \ref{fig:gap_insert_demo}: for the $k$-th linear segment ($k \in [1,K]$), we can propose a hypothetical linear model shown in red line by connecting two anchoring points after gap insertion: the first data point $\{x_{k,1}, y_{k,1}^g\}$ and the last data point $\{x_{k,m}, y_{k,m}^g\}$. For the first data point, $y_{k,1}^g = y_{k,1} + \sum_{j<=k}U_j$ where $U_j$ is the number of total inserted gaps in the previous $j$-th segment, while for the last data point, $y_{k,m}^g = y_{k,m} + \sum_{j<=k}U_j + \rho (y_{k,m}-y_{k,1})$ as $\rho (y_{k,m}-y_{k,1})$ is the total number of gaps to be inserted in the $k$-th segment.
In this way, the gap-inserted positions $y_{k,i}^g$ can be calculated as:
\begin{equation}
\label{equ:gap-insert}
\begin{split}
y_{k,i}^g& =y_{k,1}^g + (x_{k,i}-x_{k,1}) \times \frac{y^g_{k,m}-y_{k,1}^g}{x_{k,m}-x_{k,1}}, \\
&=y_{k,1} + \sum_{j<=k}U_j +  (x_{k,i}-x_{k,1}) \times \frac{(y_{k,m}-y_{k,1})(1+\rho)}{x_{k,m}-x_{k,1}}. \\
\end{split}
\end{equation}
From a geometric perspective, the data after inserting gaps are placed along the red hypothetical line and consequently, the data is re-distributed to meet our ``expected linear results''.
Note that the estimated $y^g$s can be non-integers and we will discuss how to physically place the keys according to their $y^g$s in Section \ref{sec:place-strategy}.
Now let's first theoretically examine the effectiveness of the proposed result-driven gap insertion technique.

\paragraph{Theoretical Analysis}.
In this section, we will theoretically show that the index mechanism learned on $\data_{g}$ achieves better preciseness and tighter generalization bound than the one learned on $\data$. 
Here we leverage the information bottleneck principle \cite{shamir2010learning} to quantify the mutual information between the input keys and the positions to be predicted, and analyze the preciseness and generalization for indexes learned on $\data_{g}$ and $\data$. 

Let's consider the index learning task in the context of information compression: The key $x \in \mathcal{X}$ and the position $y \in \mathcal{Y}$ are both random variables, and there are some statistical dependencies between $x$ and $y$. 
The index learning task can be regarded as to find an optimal compact representation $\tilde{x}$ of $x$, which compresses $x$ by removing the un-related parts of $x$ that have no contribution to the prediction of $y$; and then learn the correlation between $\tilde{x}$ of $y$.
Formally, the index learning task is minimizing $I(x;\tilde{x})-\beta I(\tilde{x};y)$, where $I()$ indicates the mutual information and $\beta$ is a coefficient to adjust the trade-off between the degree of compression $I(x;\tilde{x})$ and the degree of preserving predictive information $I(\tilde{x};y)$. 

Let $\mathcal{\tilde{X}}$ be the set of minimal sufficient statistics of $\mathcal{X}$ with respect to $\mathcal{Y}$, \ie, $\forall x \in \mathcal{X}, \tilde{x} \in \mathcal{\tilde{X}}, y \in \mathcal{Y}: p(x | \tilde{x}, y)=p(x | \tilde{x})$. 
In the context of index learning, we interpret $\mathcal{\tilde{X}}$ as the optimal hidden compact representation of keys $\mathcal{X}$. 
Note $\mathcal{\tilde{X}}=\mathcal{X}$ is always sufficient for $\mathcal{Y}$ while usually not the optimal compact one with minimal $|\mathcal{\tilde{X}}|$.
Then we can see that \textit{$|\mathcal{\tilde{X}}|$ becomes smaller on $\data_{g}$ after inserting gaps into the original data $\data$}, which makes the index learning easier and improves the index preciseness. 
To clarify the reason, recall that we insert gaps based on several hypothetical linear models.
As the linearity of transformed data increases, the linear correlation between keys and gap-inserted positions increases accordingly.
In other words, we need less data information to learn the parameters of linear models, resulting in redundant $\mathcal{\tilde{X}}$ and smaller $|\mathcal{\tilde{X}}|$.
In the extreme case, $\mathcal{\tilde{X}}$ can have only one trivial $\tilde{x}$, that is $\mathcal{\tilde{X}}=\{x_1-1\}$, if the whole transformed data is on a simple line (e.g., $y=x-x_1+1$).

Besides, we can show that the index learned on $\data_{g}$ achieves a tighter generalization bound after inserting gaps into the original $\data$.
Let $\hat{I}()$ be the \textit{empirical estimate} of the mutual information for a given data, which is a sample of size $n$ from the joint distribution $\{\mathcal{X}, \mathcal{Y}\}$. 
Shamir et al. \cite{shamir2010learning} prove that the generalization error $E(\tilde{x},y)=|I(\tilde{x};y) - \hat{I}(\tilde{x};y)|$ can be bounded in a data-dependent form with a probability of at least $1-\delta$:
\begin{equation}
\begin{split}
\small
&E(\tilde{x},y) \leq ~ \frac{(3|\mathcal{\tilde{X}}| +2) \sqrt{log(4/\delta)}}{\sqrt{2n}} 
+ \frac{( |\mathcal{Y}|+1)(|\mathcal{\tilde{X}}| +1)-4}{n}.
\end{split}
\end{equation}
In summary, the generation error between the optimal hidden compact representation and the empirical estimates from finite sample $\data$ or $\data_g$ is bounded in $O(\frac{|\mathcal{\tilde{X}}| |\mathcal{Y}|}{\sqrt{n}})$. 

Now let's compare the bounds on the original data $\data$ and the gap-inserted data $\data_{g}$. 
Note that we transform the original $y$ into $y^g$ with a one to one mapping due to the key-position monotonicity, and thus the $|\mathcal{Y}|$ and data size $n$ remain the same. 
The other factor $|\mathcal{\tilde{X}}|$ does matter, and it becomes smaller after gap insertion as we analyzed above.
As a result, the index mechanism learned on $\data_{g}$ has a tighter generation bound than the index mechanism learned on $\data$, and it can generalize to possible keys better. 
Later, we will discuss how the proposed gap insertion technique can handle dynamic scenarios in Section \ref{sec:handle-dynamic} and experimentally confirm its advantages in Section \ref{exp:gap}. 

\paragraph{Gap Insertion for Non-Linear Models}.
So far, our discussion about the proposed gap insertion is based on a collection of linear models. 
However, the idea of results-driven gap insertion is general and easy to be extended to other non-linear models.
Specifically, as long as the non-linear models to be learned are monotonically increasing functions, we can also introduce the non-linear hypothetical lines by anchoring a few points whose transformed positions can be determined.
Then based on the hypothetical lines, we can inference the positions for the keys of other non-anchoring points to fit hypothetical models.

\subsection{Physical Implementation of Gaps}
\label{sec:place-strategy}
\paragraph{Physical Key Placement}.
We have discussed how to logically estimate the gaps to be inserted such that a better index can be learned on the gap-inserted data. 
However, as the estimated positions can be non-integral, we need to round them into integers and physically place the keys according to their adjusted positions.
This is non-trivial since we have to maintain the key-position monotonicity during the placement.
Let's assume two anchoring keys $x_{i}$ and $x_{j}$ where $x_{i}<x_{j}$, whose physical positions are integers and have been determined to be $y^g_{i}$ and $y^g_{j}$ respectively. In practice, the first and last points of each learned segment can be such anchoring points (please refer to \emph{Result-Driven Gap Insertion} above).
Consider $m$ non-anchoring keys to be placed, whose keys are larger than $x_{i}$ and less than $x_{j}$, we need to choose suitable positions to place them.
On one hand, the rounding of estimated (non-integral) positions may lead to conflicted $y^g$s. 
On the other hand, there may be more than $m$ available positions used for physical placement, i.e., $y^g_{j}-y^g_{i}>m$.

\begin{figure}[htbp]
	\centering
	\subfigure{
		\begin{minipage}[]{0.47\textwidth}
			\centering
			\includegraphics[width=\textwidth]{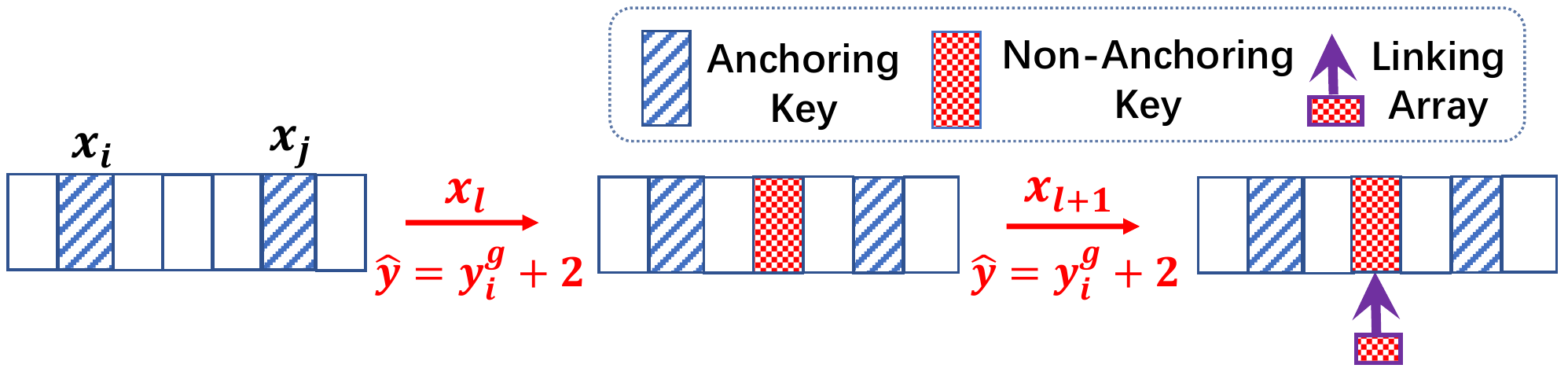}
		\end{minipage}
	}
	\vspace{-0.2in}
	\caption{Illustration of linking array based key placement.}
	\label{fig:key_place_demo}
\end{figure}

To address this problem, we propose a \textit{linking array} based key placement strategy as shown in Figure \ref{fig:key_place_demo}. 
We place a key $x_l$ according to its predicted position $M(x_l)$. If two keys get conflicted predictions, we place them on an external linking array with the same position. 
This strategy reduces the differences between physical positions and result-driven estimated positions by fully utilizing the prediction ability of learned index $M$, at the price of introducing potential disambiguation costs to determine exact positions in the linking arrays, and increasing storage costs for linking arrays.
We also explore two other strategies to balance the advantage of accurate prediction and the disadvantage of extra cost, and experimental results show that the strategy presented above works well in practice.
Due to the space limitation, we omit the details and keep with the above linking array based key placement strategy.

As a result, we place all data in a gapped array $G$ and several linking arrays $\mathcal{A} = \{A_i\}$. A linking array $A_i$ contains the $i$-th occupied key in $G$ and at least one other key having the same position $y^g=i$ in $G$. The $i$-th occupied key in $G$ will be the minimum keys of $A_i$, i.e., $G(i) = min(A_i)$. Clearly, the key-position monotonicity is maintained for keys in the first-level gapped array.
Next, we will describe how to read (\ie, lookup operation) and write (\ie, dynamic scenario) with the gap-based array.

\paragraph{Lookup Operation}.
As mentioned above, in the gapped array $G$, there are more positions than the number of indexed keys, \ie, $y^g_{j}-y^g_{i}>m$. Further, the proposed linking arrays can increase the number of unoccupied positions. So how to conduct lookup operations on such data layout? The key idea here is to maintain a \textit{total order} for keys in the first-level gapped array $G$, such that the unoccupied positions in $G$ are comparable to indexed keys. 
Specifically, we set the key of unoccupied position to be the key of the first occupied position at its right hand, and use an additional indicator to indicate that such unoccupied positions have no payloads. 
In this way, we can define a comparison rule for the situation of having the same key but different payloads: a key pointing to an empty payload is smaller than the same key pointing to a non-empty payload, and thus the $G$ is totally ordered. 

In practical, if the searched positions point to the secondary linking arrays, we conduct a linear scan on the linking arrays due to the fact that there are usually a limited number of keys with the same conflicted predicted position.

\subsection{Dynamic Scenario}
\label{sec:handle-dynamic}
\paragraph{Handling Dynamic Scenarios}.
Dynamic operations, especially inserting new keys, are challenging for learned index since the machine learning models may need to be re-trained to maintain precise predictions. 
Thanks to the proposed gap insertion, we can introduce a simple yet effective extension to efficiently support dynamic operations, meanwhile, maintaining comparable prediction preciseness without model re-training.
This is because we have re-distributed the data by introducing gaps in a result-driven manner. 
Further, we place keys directly according to their predicted positions and allow position conflicts, thus the unoccupied positions are data-dependently reserved for possible inserted new keys.
In other words, we can insert new keys based on their predicted positions by a learned index $M$. These positions can be either unoccupied or occupied, and for both cases, the inserted data follow hidden key-position distribution already learned by $M$, such that $M$ can maintain preciseness at no price or a small price of correction cost in the lookup operation.

Specifically, given a key $x$ to be inserted, we expect to place $x$ on the position as close as possible to its predicted position by $M$. 
Meanwhile, to simplify the lookup operation after new data insertion, we need to maintain the total order for $G$ with corresponding linking arrays $\mathcal{A}$: ${\forall}A_{i-1} \in \mathcal{A},~~  G(i-1)=min(A_{i-1}) \leq max(A_{i-1}) < G(i)$. 
Using $M$, we first get $x$'s predicted position $\hat{y}$ and its upper bound's position $y_{ub}$ in $G$, i.e., the position of largest key in $G$ that is less than or equal to $x$, and then we insert $x$ in either $G(\hat{y})$ (unoccupied case) or the linking array $A_{y_{ub-1}}$ (occupied case) to maintain the key-position monotonicity and the total order of $G$.

For the delete operation, first, we look up the $x$ to be deleted. If $x$ is stored in a linking array $A$, we can easily remove $x$ from $A$ when $|A|>2$, or make the corresponding position an occupied one with the other key in $A$, and delete $A$ when $|A|=2$. Otherwise, the deletion will make an unoccupied position, and we need to update the keys in $G$ whose values are the same as $x$, by setting them as the key at $x$'s upper bound's position $y_{ub}$, i.e., $\forall y_j: G(y_j)=x$, set $G(y_j) \leftarrow G(y_{ub}) $. 
For the update operation, we can look up the data to be updated using its key and simply reset its payload value.

\subsection{Reducing Cost of Gap Insertion}
Now let's analyze the cost of the proposed gap-based index learning.
Recall that we first learn an index with $K$ linear segments on original data $\data = \{x_i, y_i\}_{i=1}^{n}$, then we generate gap-inserted data $\data_g = \{x_i, y_i^g\}_{i=1}^{n}$ and re-learn a better index on $\data_g$. With existing learned index methods such as FITing-Tree \cite{galakatos2019fiting} and PGM \cite{Ferragina:2020pgm}, the cost of these three steps are all $O(n)$.
Although the complexity of the proposed gap-based index learning is still $O(n)$, we introduce some extra training costs due to the inserted gaps.
Fortunately, we can leverage the sampling technique mentioned in Section \ref{sec:sample} to further reduce the learning cost using small sampled data, and thus maintain high learning efficiency. 

\paragraph{Combining Sampling and Gap Insertion}.
Formally, with a sample rate $s \in (0, 1.0 ]$ and the sample size $n_s=n*s$, we first learn index $M$ on a sampled subset $\data_s=\{x_i, y_i\}_{i=1}^{n_s}$ of $\data$. 
Then we use all the $K$ segments of $M$ to estimate the positions of anchoring keys, physically place other non-anchoring keys based on their predicted positions from proposed hypothetical models, and get the gap-inserted dataset $\data_{s, g}=\{x_i, y_i^g\}_{i=1}^{n_s}$.
Finally we re-learn an index mechanism on $\data_{s, g}$, and get the whole gap-inserted data $\data_{g}=\{x_i, y_i^g\}_{i=1}^{n}$ by physically placing the un-sampled keys $\data_{s}^-=\data-\data_s$ into $\data_{s, g}$. 
Summing up, we can learn an index with a data scale of $n_s$ and physically maintain the whole data layout in $O(n)$.
In Section \ref{exp:effect_of_gap}, we will experimentally show that by combining sampling and gap insertion techniques, both the preciseness and efficiency of existing learned index methods can be significantly improved.
\section{Experiments}
\label{sec:exp}
In this section, we conduct experiments aiming to answer the following questions:
(1) What are the strengths and weaknesses of the existing learned index methods evaluated by the proposed MDL framework (Section \ref{exp:trade-off-MDL})?
(2) Can we improve the learning efficiency of the learned index by the sampling technique (Section \ref{exp:sample})?
(3) Can we enhance the learning effectiveness of learned index by the proposed result-driven gap insertion technique (Section \ref{exp:gap})?

\subsection{Experimental Settings}
We conduct all the experiments on a Linux server with an Intel Xeon Platinum 8163 2.50GHz CPU, whose L1 cache size, L2 cache size and L3 cache size are 32KiB, 1MiB and 33MiB respectively. 

\paragraph{Baselines}.
We compare a traditional and three learned index methods, which are also adopted as base methods to incorporate the plug-in sampling and gap insertion techniques.

\textit{B${+}$ Tree}: we use a standard in-memory B+ Tree implementation, stx::btree (v0.9)  \cite{stxbtree}. Following \cite{galakatos2019fiting, kraska2018case}, we evaluate the B+ Tree index with dense pages, \ie, the filling factor = 100\%.

\textit{Recursive Model Index (RMI)} \cite{kraska2018case}: RMI is a hierarchy learned index method consisting of typically two-layer or three-layer machine learning models. Following previous works \cite{kraska2018case, Ferragina:2020pgm, ding2019alex}, we adopt a two-layer RMI with linear models.

\textit{FITing-Tree} \cite{galakatos2019fiting}: This is an error-guaranteed learned index method using a greedy shrinking cone algorithm to learn piece-wise linear segments. The learned segments are organized by a B+ Tree and here we adopt the stx::btree to organize these learned segments.

\textit{Piecewise Geometric Model (PGM)} \cite{Ferragina:2020pgm}: PGM is a state-of-the-art error-guaranteed learned index method, which improves FITing-Tree by learning an optimal number of linear segments. There are three PGM variants based on binary search, CSS-Tree\cite{rao1998csstree} and recursive construction. Here we evaluate its recursive version since it beats the other two variants.

\paragraph{Datasets}. 
We conduct experiments on four wildly adopted real-world datasets that cover different data scales, key types, data distributions and patterns:

\emph{Weblogs} \cite{kraska2018case, galakatos2019fiting, Ferragina:2020pgm}:  The Weblogs dataset contains about 715M log entries requesting to a university web server. The index keys are unique log timestamps. This dataset contains typical non-linear temporal patterns caused by school online transactions, such as department events and class schedule arrangements.

\emph{IoT} \cite{galakatos2019fiting, Ferragina:2020pgm}: The IoT dataset contains about 26M recordings from different IoT sensors in a building. The index keys are unique timestamps of the recordings. This dataset has more complex temporal patterns than Weblogs, since IoT data are more diverse  (e.g., motion, door, etc.) and prone to noise during the data collection.

\emph{Longitude and LatiLong}
\cite{kraska2018case,galakatos2019fiting,ding2019alex,Ferragina:2020pgm}: These two datasets contain location-based data that are collected around the world from \textit{Open Street Map} \cite{OpenStreetMap}. 
The index keys of \emph{Longitude} are the longitude coordinates of about 1.8M buildings and points of interest.
Similar to \cite{ding2019alex}, the index keys of \emph{LatiLong} is compounded of latitudes and longitudes as $key = 90 \times latitude + longitude$.

\paragraph{Evaluation Metrics}.
For the storage cost evaluation, we measure the \textit{index size}. We use 64-bit payloads for all baselines and 64-bit key pointers for all datasets.
The index size of B+ Tree is the sum of the sizes of inner nodes and the sizes of leaf nodes including payloads.
The index size of RMI is the sum of payloads and the sizes of linear models, including slopes, intercepts, maximum positive/negative prediction errors storing as double-precision floats.
The index sizes of Fitting-Tree and PGM are the sum of payloads and the sizes of their linear segment models, including slopes and intercepts storing as double-precision floats.

For the efficiency evaluation, we measure several kinds of time costs in nanoseconds, including the \textit{index construction time}, the \textit{index prediction time} per query (\ie, getting predicted position $\hat{y}$ given queried key $x$), the \textit{index correction time} per query (\ie, getting the true position $y$ given $\hat{y}$) and the \textit{overall query time} per query (\ie., getting $y$ given $x$). 
Besides, we calculate the \textit{Mean Absolute Error (MAE)} between predicted positions and true positions as $\frac{1}{|D|} \sum_{x \in D}|y - \hat{y}|$). MAE is a metric widely adopted for machine learning algorithms and determines the index correction time in the context of learned index.

\subsection{Comparison under MDL Framework}
\label{exp:trade-off-MDL}
The proposed MDL-based framework quantifies an index as terms of $L(M)$ and $L(\data|M)$.
In this subsection, we compare several existing index methods to demonstrate several performance trade-offs between $L(M)$ and $L(\data|M)$, and the impact of varying the mechanism family $\mathcal{M}$. We demonstrate the results on IoT dataset and omit the results on the other three datasets due to the space limitation and similar conclusions.

\subsubsection{Performance Trade-Off}
From the view of machine learning, $L(\data|M)$ measures the prediction loss on training data while $L(M)$ plays a regularization role of the learned model.
These two terms are usually contradicted and can be balanced by the coefficient $\alpha$ that controls performance trade-offs.
As analyzed in Section \ref{sec:alpha-in-MDL}, several parameters of existing index methods implicitly take the role of $\alpha$: the number of the layer-2 models in RMI is proportional to $\alpha$, while the page size of B+ Tree and error bound $\epsilon$ of Fitting-Tree and PGM are inversely proportional to $\alpha$.
Here we vary these tunable parameters to study various performance trade-offs by adopting different $L(M)$ and $L(\data|M)$.

\begin{table*}[tb]
	\centering
	\begin{tabular}{ccccccc}    
		\toprule      
		 & $T_{build}$& $T_{predict}$& $T_{correct}$ & $T_{overall}$ & Index Size & MAE \\    
		\midrule
		B+ Tree (pageSize=256) & 7,770,595,420 & 305 & 338 & 643 & 489,877,504 & 63.5005 \\
		RMI ($|M|$=100k) & 682,842,396 & 68 & 539 & 607 & 128,720,736 & 173.513 \\
		FITing ($\epsilon=128$, $|M|$=11,830) & 788,580,446 & 106 & 203 & 309 & 121,910,880 & 27.3392 \\
		PGM ($\epsilon=128$, $|M|$=8,813) & 1,556,264,268 & 121 & 224 & 355 & 121,740,264 & 36.817 \\
		\bottomrule
	\end{tabular}
	\caption {Performance comparison on the IoT dataset for B-tree with dense page, RMI, Fitting-tree and PGM. For RMI, FITing-Tree and PGM, $|M|$ indicates their numbers of last-layer linear models. $T$ indicate time in ns, and the index size is in bytes.} 
	\label{tab:model-capacity-compare} 
	\vspace{-0.2in}
\end{table*}

\paragraph{Trade-off between Storage Cost and Query Efficiency.} 
We first set $L(\data|M)= t_q(\data|M)$ indicating the \textit{overall query time} per query and $L(M)=SIZE(M)$ indicating the \textit{index size} to explore the trade-off between storage cost and query efficiency. 
We plot the curves of overall query time per query and index size by varying $\alpha$s of different methods in Figure \ref{fig:tradeoff_space_query}. 

\begin{figure}[htbp]
	\centering
	\subfigure{
		\begin{minipage}[]{0.47\textwidth}
			\centering
			\includegraphics[width=\textwidth]{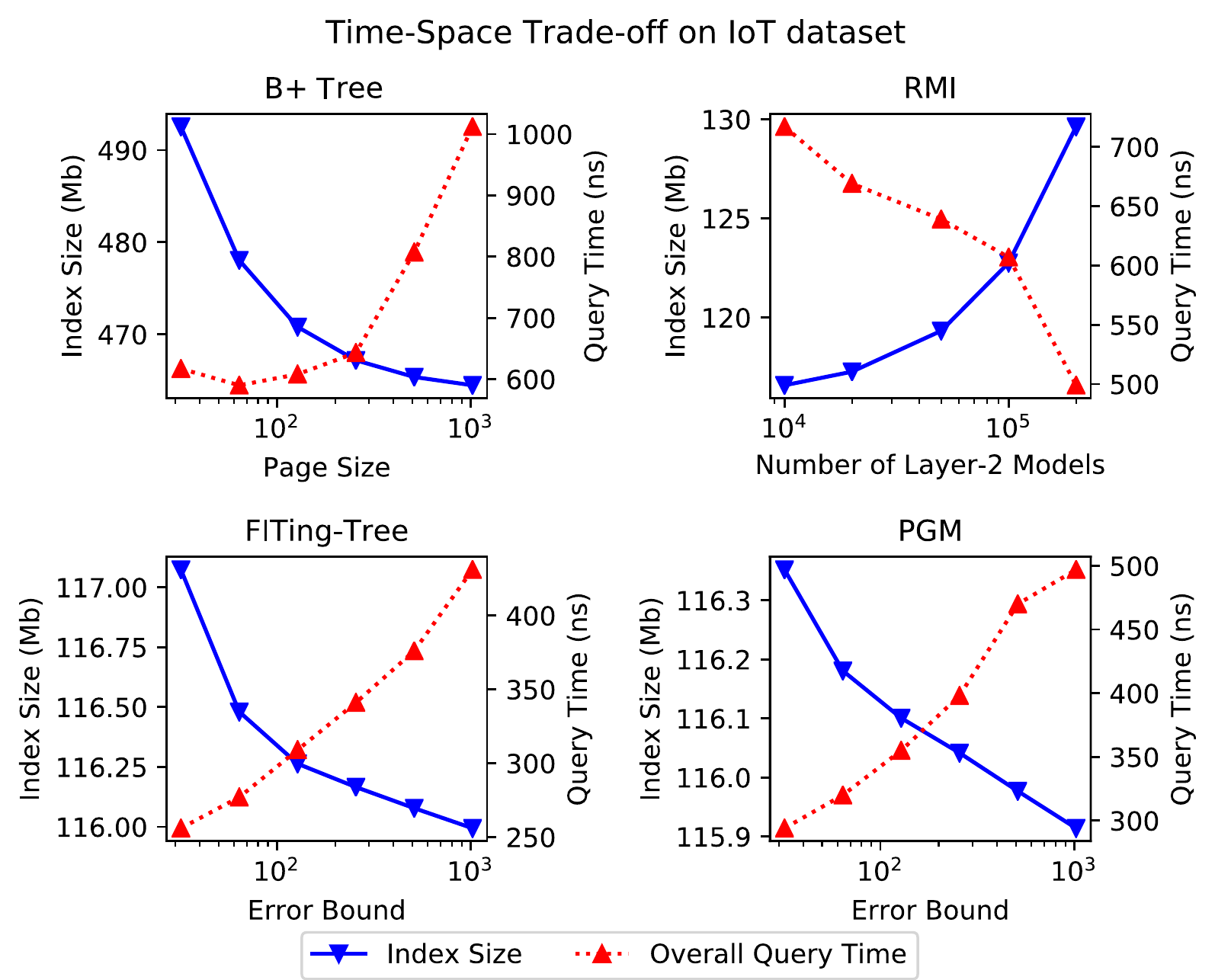}
		\end{minipage}
	}
	\vspace{-0.2in}
	\caption{Trade-off between storage cost and query efficiency of four index methods. }
	\label{fig:tradeoff_space_query}
\end{figure}

Overall speaking, all the four methods have consistent trade-off trends: a smaller $\alpha$ achieves a smaller index size, but leads to larger query time, as it penalizes less on the term $L(\data|M)$. 
Here we test several $\alpha$s for each method, and among them, some choices gain good trade-offs (e.g., $\epsilon=128$ for FITing-Tree). 
These results naturally raise an open question: what is the ``best'' space-time trade-off and how to achieve it by tuning $\alpha$?
In practice, it depends on the demand of users.
PGM has explored both ends of the space-time trade-off by varying the degree of linear approximation and searching an $\epsilon$ that achieves minimal space or minimal query time. 
In the context of MDL, we can explicitly incorporate the preference of users into the objective functions. And then the fruitful ideas of hyper-parameter optimization \cite{feurer2019hyperparameter} can be incorporated to automatically search an optimal $\alpha$ to achieve preferable trade-offs.

Also note that the query time of B+ Tree increases heavily from page size 256 to 512 since the index begins to fall out of the L1-cache. 
Inspired by this phenomenon, we can see an interesting future direction: designing cache-sensitive regularization in index learning. For example, we can harness the information about the size of cache lines into the objective function by gate-control techniques \cite{Chung2014EmpiricalEO}.

\paragraph{Trade-off between Prediction and Correction.} 
We then drill down the index querying process and explore the trade-offs between prediction cost and correction cost. To measure the prediction cost, we set $L(M) = t_p(M)$, where $t_p(M)$ means the \textit{prediction time}. To measure the correction cost, we can set $L(\data|M)$ to be the \textit{correction time}, \ie $L(\data|M) = t_c(\data|M)$, or \textit{MAE}, \ie, $L(\data|M)= \frac{1}{|\data|}\sum_{x \in \data}|y - \hat{y}|$.

\begin{figure}[htbp]
	\centering
	\subfigure{
		\begin{minipage}[]{0.47\textwidth}
			\centering
			\includegraphics[width=\textwidth]{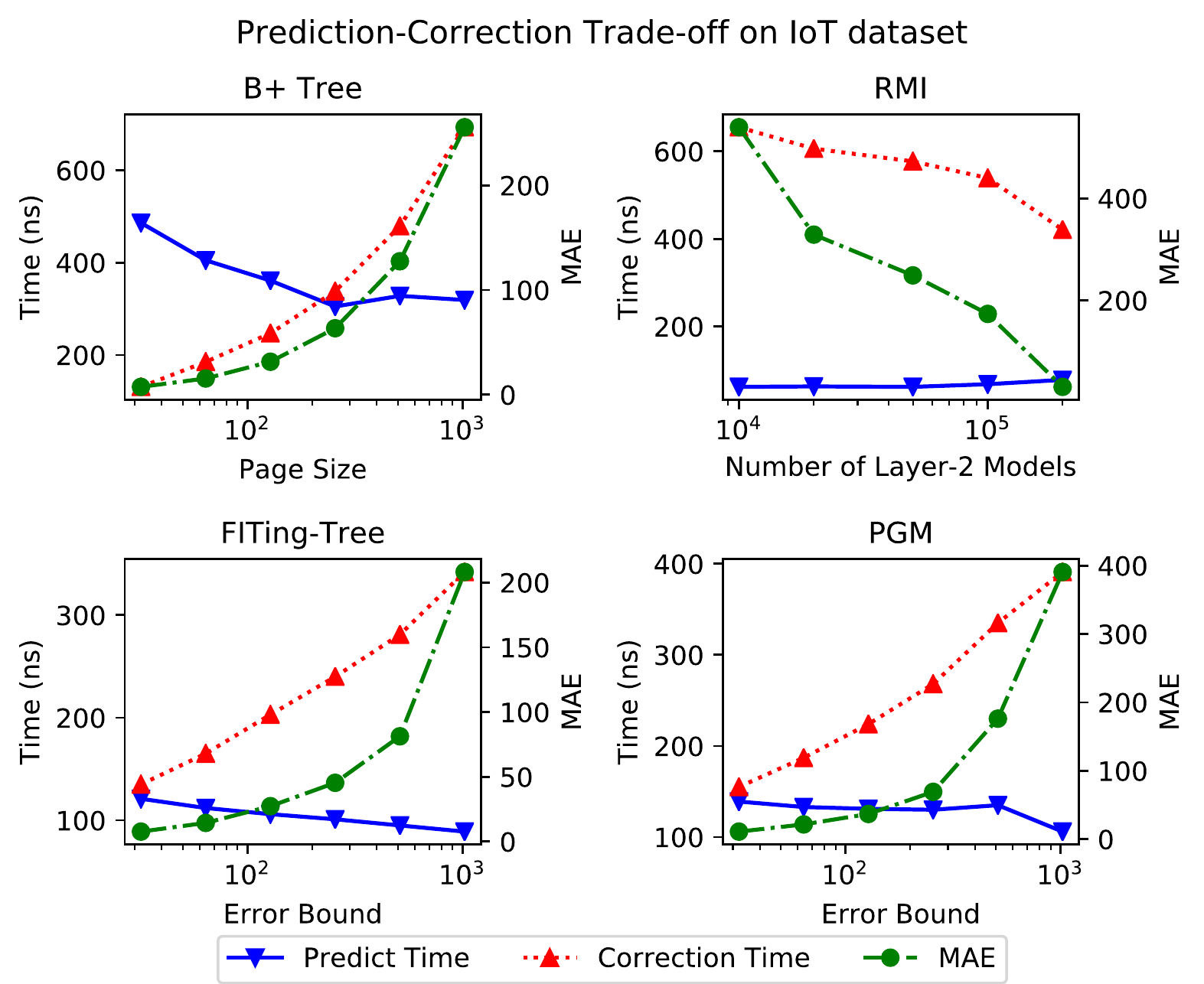}
		\end{minipage}
	}	
	\vspace{-0.2in}
	\caption{Trade-off between prediction time and correction cost of four index methods.}
	\label{fig:tradeoff_predict_correct}
\end{figure}

We plot the prediction time and correction cost (including \textit{correction time} and \textit{MAE}) for different index methods in Figure \ref{fig:tradeoff_predict_correct}. 
As the parameter $\alpha$ increases (\ie, page size and error bound decrease as they are inversely proportional to $\alpha$), the prediction time of B+Tree, FITing-Tree and PGM both increase while their correction costs decrease. 
From the view of MDL, a more complex $L(M)$, requiring a longer prediction time, usually make precise predictions and thus smaller correction cost $L(\data|M)$.
Here RMI is a bit different as its prediction time keeps almost the same as $\alpha$ changes.
This is due to the fact that the major part of inference in RMI is calculating $L$ linear functions through a $L$ layer tree (here $L$ is 2), which is independent of the tunable number of the layer-2 models.

Besides, the results in terms of \textit{MAE} and \textit{correction time} of all the index methods have the consistent increasing trends: all of them increase as $\alpha$ increases, while with different increasing speeds (see the different gaps between red lines and green lines). 
This meets our expectation since the evaluated index methods use a binary search within certain ranges: page size for B+ Tree, maximum positive/negative errors for RMI, and $\epsilon$ for FITing-Tree and PGM. Meanwhile, the MAEs are bounded by these values and approximatively reflect the correction time. 
Note that the MAE are usually smaller than these fixed search ranges, which allow us to further use exponential search to speed up the correction as studied in \cite{ding2019alex}. 

Another observation is that the prediction time is usually larger than the correction time for B+Tree, FITing-Tree and PGM. Comparing to their lookups within a continuous memory in the correction stage, their recursive lookups in the prediction stage tend to incur more costs, which are due to the organization of hierarchical models. This inspires us to design learned index having as few layers as possible, which is adopted by RMI method.

\subsubsection{The Effect of Model Capacity}
\label{exp:effect_of_model_capacity}
We have studied the impact of varying the regularization coefficient $\alpha$ for different index methods. Now we study the intrinsic property of the mechanism by comparing different model capacity, or say, comparing the different families of learning models. 

To fairly compare these four index methods, we choose the most favorable $\alpha$s for each of them, which are the values of $\alpha$s closest to the intersections of their time-space trade-off lines in Figure \ref{fig:tradeoff_space_query}. 
The results are summarized in Table \ref{tab:model-capacity-compare}. We can see that all the three learned index methods achieve much smaller storage costs while faster lookup speeds than traditional B+Tree.
Among the three learned index methods, we observe that FITing-Tree and PGM have much smaller MAE, better query efficiency, and smaller index size than RMI. 
However, RMI method requires less construction time since it doesn't need the organization of learned segments, such as an assistant B+Tree for FITing-Tree and a recursive strategy for PGM. 
Note that with the same $\epsilon=128$, PGM achieves a smaller index size than FITing-Tree by learning the optimal number of linear segments, while FITing-Tree adopts a greedy learning algorithm that leads to more segments than PGM.

\subsection{Learned Indexes with Sampling}
\label{exp:sample}
Recall that existing learn index methods including RMI, FITing-Tree and PGM need to scan at least one pass of the whole data to learn several sub-models, and further organize the learned sub-models by a model tree or a B+Tree.
There is a great potential to reduce the data scanning cost with smaller data, and reduce the index organization cost with fewer sub-models to be learned.
Here we plug the proposed sampling technique into these learned index methods to accelerate the index learning. 
Specifically, given a sample rate $s \in (0, 1.0]$, we first get a sampled dataset $\data_s$ by randomly sampling $s \times |\data|$ keys from $\data$, then learn the indexes from $\data_s$, and finally test the indexes on the whole dataset $\data$. 
When directly applying this procedure with original RMI, FITing-Tree and PGM methods, we found some pretty large prediction errors caused by few unsampled keys that cannot be covered by learned sub-models (\ie, linear segments here).
Fortunately, we can eliminate these large errors by some simple yet effective patches. For FITing-Tree and PGM, we connect the adjacent segments learned from $\data_s$ to cover all the un-sampled keys. For RMI, we propose a \textit{RMI-Nearest-Seg} patch, which re-assigns a key covered by an un-trained (empty) sub-model to its nearest trained sub-model.
Since the sampling may cause few violations of error bounds, we adopt the exponential search to find the searching boundary around the predicted position. 

\begin{figure}[htbp]
	\centering
	\subfigure{
		\begin{minipage}{0.47\textwidth}
			\centering
			\includegraphics[width=\textwidth]{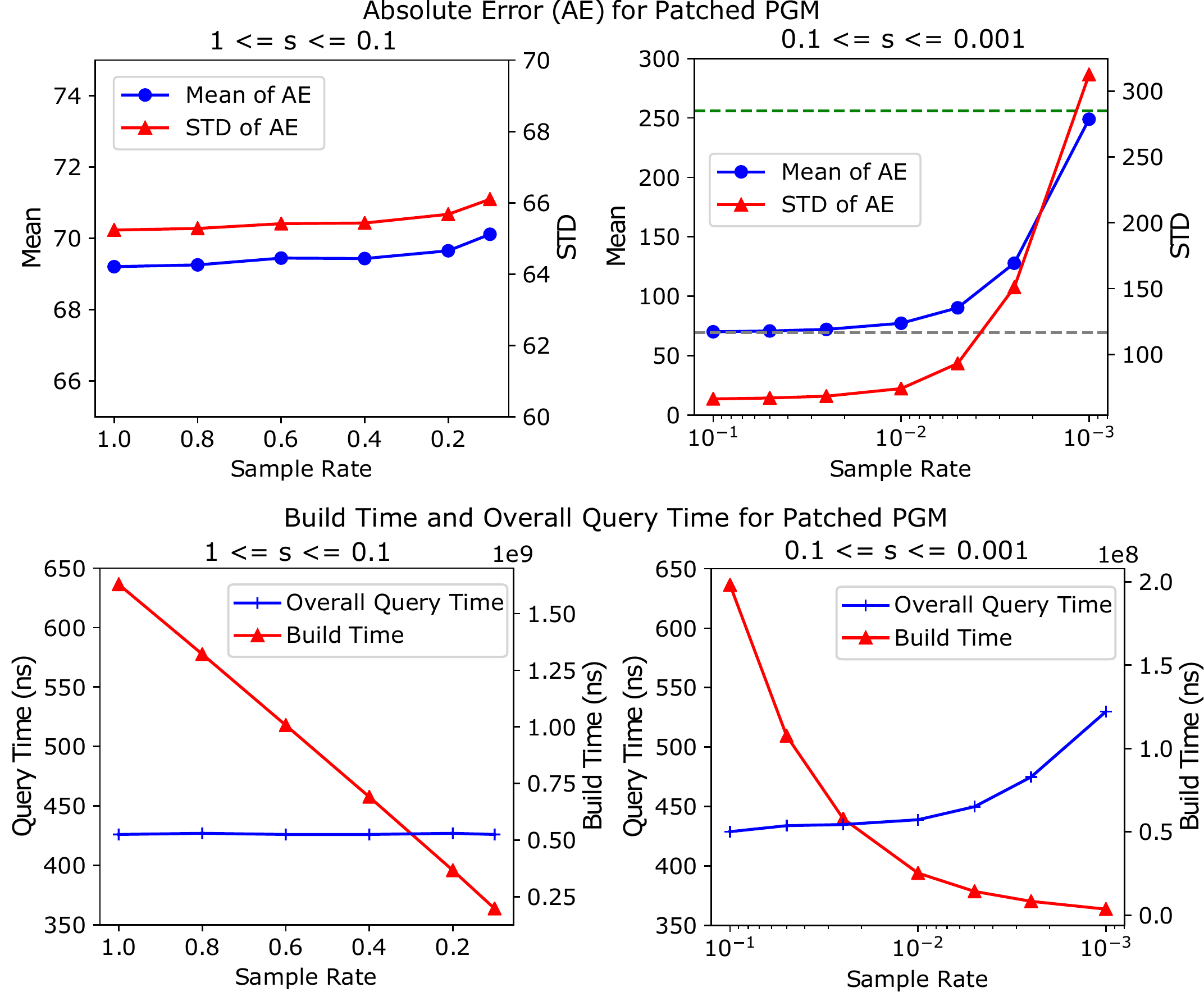}
		\end{minipage}%
	}
	\vspace{-0.2in}
	\caption{MAE, build time and overall query time of PGM index with varied $s$. The gray and green horizontal dotted lines indicate the MAE of $M_{\data}$ and $\epsilon=256$ respectively.}
	\label{fig:sample-overall-iot}
\end{figure}

By varying $s$ from 1.0 to 0.001, we evaluate the \textit{MAE}, \textit{build time} and \textit{query time} for all patched learned indexes on all the datasets.
The results of PGM on IoT dataset are shown in Figure \ref{fig:sample-overall-iot}, the other two index methods or on other datasets have similar conclusions and we omit those results due to the space limitation. 
In Figure \ref{fig:sample-overall-iot}, the two right subfigures show the results for extremely small sample rates (\ie, 0.1 to 0.001). 
For comparison, let's denote the index learned from the original whole data $\data$ as $M_{\data}$.
In the top-right subfigure, the gray and green horizontal dotted lines indicate the MAE of $M_{\data}$ and the adopted error bound $\epsilon=256$ respectively.

\paragraph{Construction Speedup}.
From Figure \ref{fig:sample-overall-iot}, we can see that the patched PGM gains significant construction speedup (e.g., the build time decrease from $1.63e9$ when $s=1$ to $1e8$ when $s=0.01$, which becomes about 78x smaller), while maintaining non-degraded query performance (e.g., when $s=0.01$, the MEA is still very close to the one of $M_{\data}$, the gray dotted line).  
Generally speaking, the build time linearly decreases as the sample rate decreases. 
On the other hand, the curves of MAE and query time are near-horizontal when the sample rate decreases until the very small one (e.g., $s=0.0025$).
These results show that we can significantly accelerate the learning of index with the proposed sampling technique while the learned index is still precise.

\paragraph{Generalization Improvement}.
As discussed in Section \ref{sec:sample-disscuss}, the sampling technique can improve the generalization ability of learned index methods, which leads to a fewer number of learned segments and correspondingly smaller index size. 
We conduct statistics of the number of learned segments for patched FITing-Tree and patched PGM, and the results are illustrated in Figure \ref{fig:sample-number-segs}. 

\begin{figure}[htbp]
	\centering
	\subfigure{
		\begin{minipage}[]{0.43\textwidth}
			\centering
			\includegraphics[width=\textwidth]{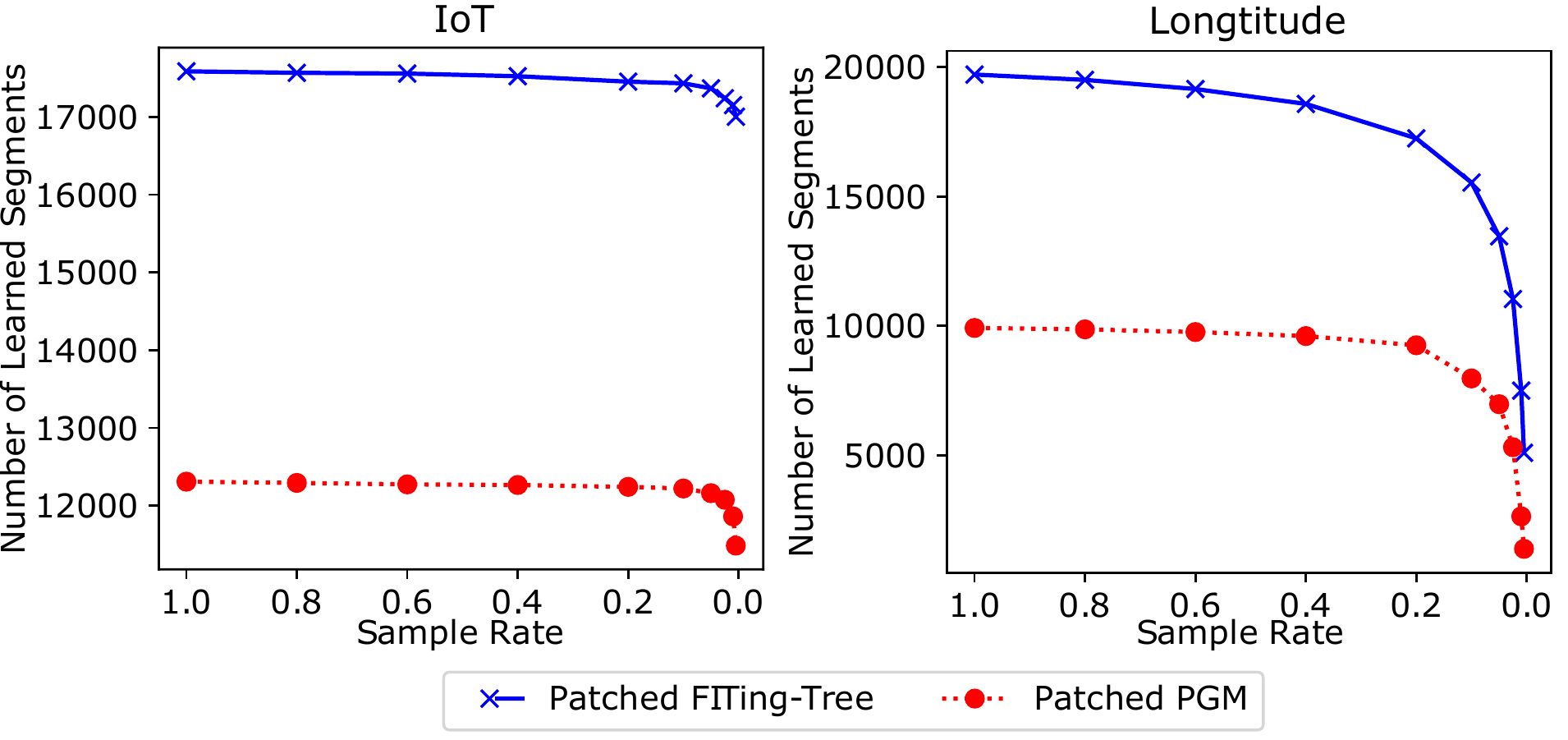}
		\end{minipage}%
	}%
	\vspace{-0.2in}
	\caption{Number of learned segments for patched FITing-Tree and patched PGM with varied $s$.}
	\label{fig:sample-number-segs}
\end{figure}

We can observe that the numbers of learned segments of both patched FITing-Tree and PGM decrease as sample rates decrease. 
This is because that as the sampled dataset becomes smaller, learned index methods can extract more general patterns in the data as some noisy keys are discarded. So some adjacent learned segments having similar slopes can be merged, and the generalization ability improves.
Note that PGM adopts an optimal piece-wise segmentation learning algorithm, and thus it is more stable than FITing-Tree that adopts a greedy learning algorithm.

\paragraph{The $\alpha$ Adjustment}.
To gain more insights about the proposed sampling technique, here, we vary the tunable $\alpha$ of learned index methods and check their the smallest ``safe'' sampled data size $n_{safe}$ that maintains a non-degraded performance. 

\begin{figure}[htbp]
	\centering
	\subfigure{
		\begin{minipage}[]{0.45\textwidth}
			\centering
			\includegraphics[width=\textwidth]{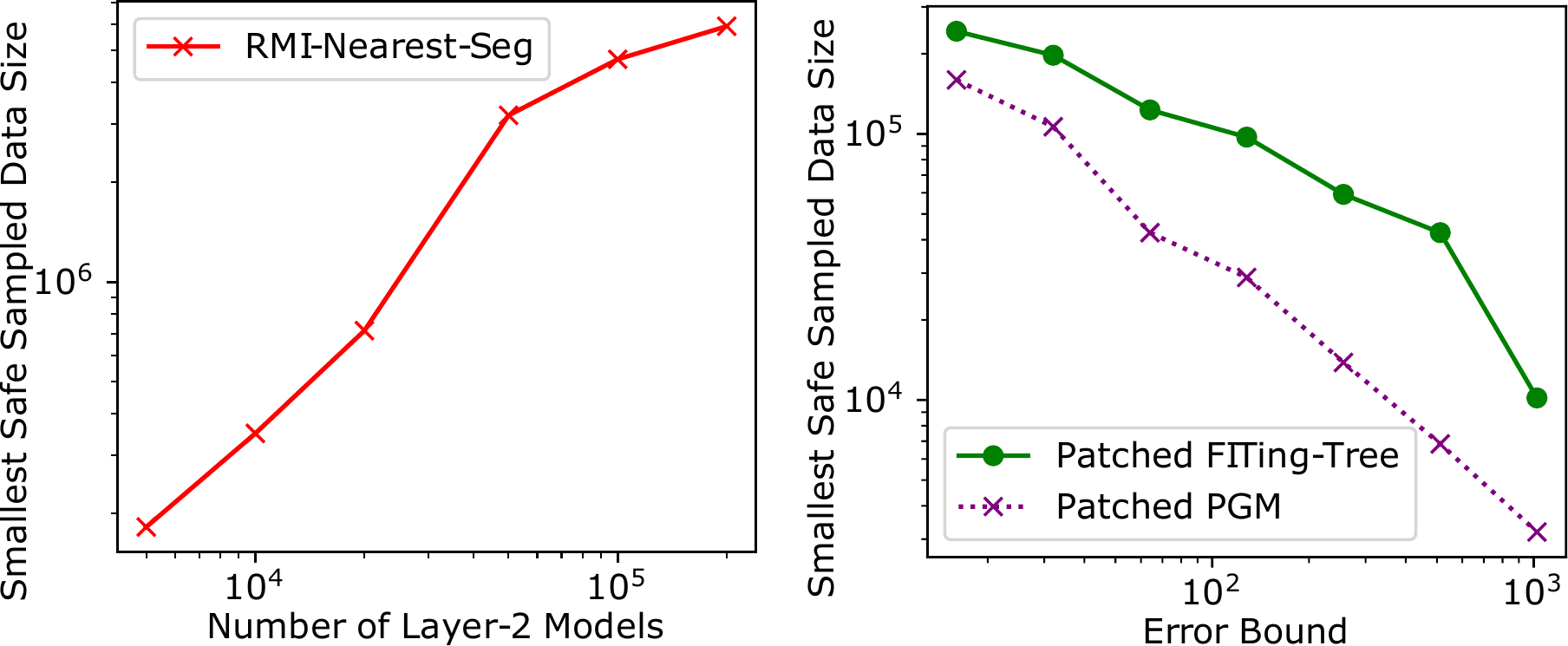}
		\end{minipage}%
	}
	\vspace{-0.2in}
	\caption{The smallest ``safe'' sampled data size $n_{safe}$ to maintain a non-degraded performance, by varying $\alpha$ of different learned index methods on IoT dataset.}
	\label{fig:sample-alpha-safe-iot}
\end{figure}

\begin{figure*}[htbp]
	\centering
	\subfigure{
		\begin{minipage}{0.16\textwidth}
			\centering
			\includegraphics[width=\linewidth]{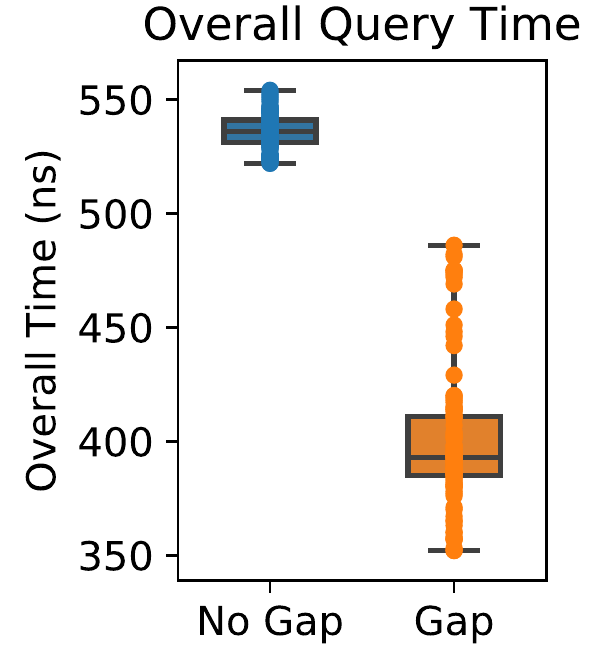}
			\vspace{-0.1in}
		\end{minipage}
	} 
	\subfigure{
	\begin{minipage}{0.16\textwidth}
		\centering
		\includegraphics[width=\linewidth]{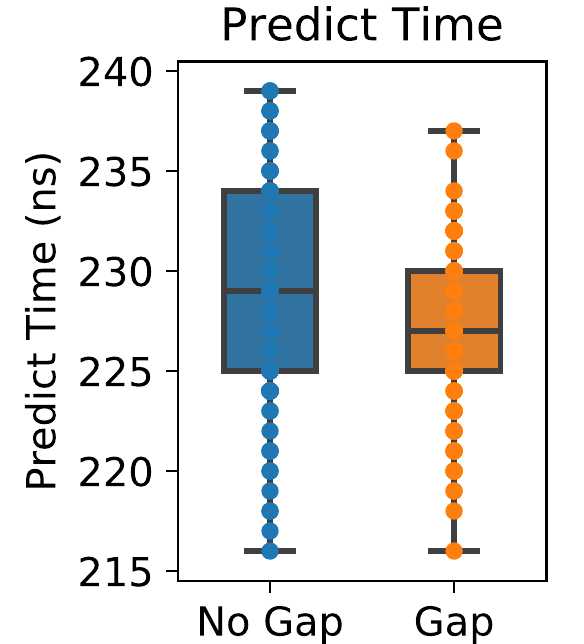}
		\vspace{-0.1in}
	\end{minipage}
} 
	\subfigure{
	\begin{minipage}{0.16\textwidth}
		\centering
		\includegraphics[width=\linewidth]{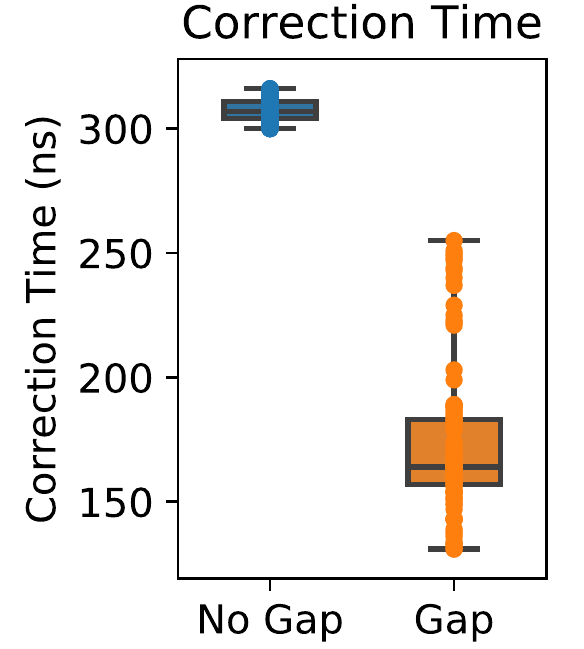}
		\vspace{-0.1in}
	\end{minipage}
} 
	\subfigure{
	\begin{minipage}{0.158\textwidth}
		\centering
		\includegraphics[width=\linewidth]{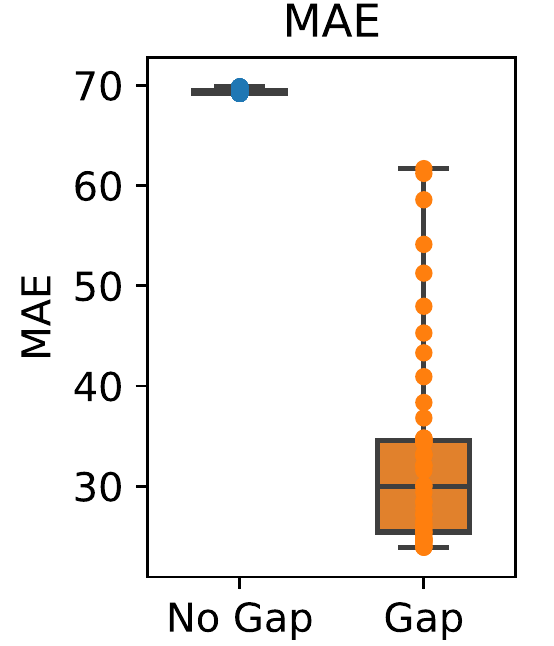}
		\vspace{-0.1in}
	\end{minipage}
} 
	\subfigure{
	\begin{minipage}{0.16\textwidth}
		\centering
		\includegraphics[width=\linewidth]{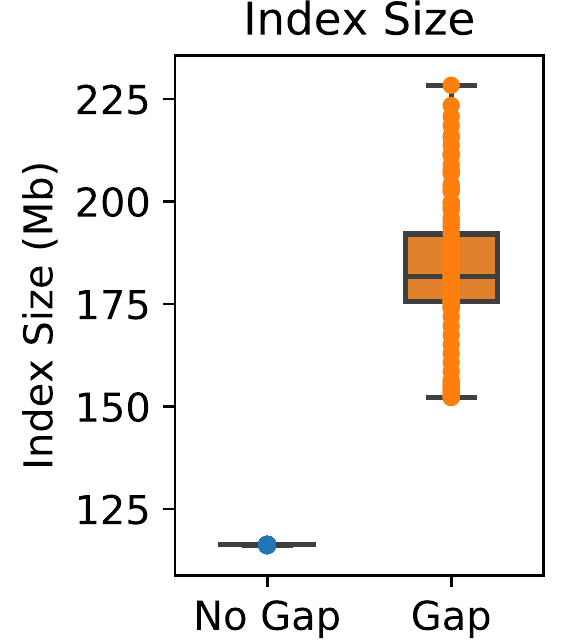}
		\vspace{-0.1in}
	\end{minipage}
} 
	\vspace{-0.23in}
	\caption{Boxplots of performance for PGM index with gap insertion on IoT dataset.}
	\vspace{-0.15in}
	\label{fig:gap-static-overall}
\end{figure*}

The results are shown in Figure \ref{fig:sample-alpha-safe-iot}. 
Recall that in Section \ref{sec:sample}, we theoretically show that it suffices to learn the index on a small sample with size as small as $O(\alpha ^2 log ^2 \mathcal{E})$ in Theorem 1. 
Applying simple log transformation, we can get $log(O(\alpha ^2 log ^2 \mathcal{E}))= O(4loglog\mathcal{E} \times log(\alpha)) = O(log(\alpha))$. 
That is, the $log(n_{safe})$ for these methods are asymptotically linear to $log(\alpha)$.
In Figure \ref{fig:sample-alpha-safe-iot}, the x-axis and y-axis represent $\alpha$ and $n_{safe}$ with log transformation respectively, and the linear trends of those plots match the above theoretical analysis (again, note that the number of layer-2 models in RMI is proportional to $\alpha$, while the error bound of Fitting-Tree and PGM are inversely proportional to $\alpha$).
With a smaller $\alpha$, we can achieve non-degraded performance with fewer samples, while a larger $\alpha$ requires us to draw more samples since more details about the data are needed to learn an index in finer granularity. 

\subsection{Learned Indexes with Gap Insertion}
\label{exp:gap}
As discussed in Section \ref{sec:gap}, we propose to learn precise index by adjusting the distribution of keys and positions, at the cost of gap insertion and index re-training. 
The cost can be further reduced by combining the sampling technique.
In this section, we empirically examine the effectiveness and efficiency of the proposed gap insertion technique.

\subsubsection{Static Scenarios}
We conduct experiments on all the adopted datasets and the three learned index methods, by varying the gap insertion rate $\rho$ from 0.5 to 0.001 and the sample rate $s$ from 1 to 0.005. Due to similar results and space limitation, we only report the results for PGM and IoT dataset here. 
The overall query time and other detailed performance numbers are summarized as boxplots in Figure \ref{fig:gap-static-overall}, where each box indicates the middle 50\% experimental points and the ``No Gap'' boxes represent the baseline index $M_{\data}$ learned without gap insertion. 

Clearly, indexes learned on the gap-inserted datasets gain significantly smaller overall query times comparing with the baseline (e.g., the speedup is up to 1.59x), which verifies the effectiveness of the gap insertion technique. Note that the overall query time is the overall performance including the advantage of improved MAE and the disadvantage of additional index size. To further analyze the detailed performance, we further break down the overall query times into the prediction time and corrections time, and also plot MAE and index size in Figure \ref{fig:gap-static-overall}. 
Comparing with the baseline without gap insertion, we can see that PGM with gap insertion achieves slightly better prediction time and much better correction time. 
To explain the improvements, we can check the MAE results, which show a significant improvement. However, the index size, including the size of introduced gaps and linking arrays, becomes larger and reduce the benefits brought by the MAE improvements. So in total, the correction time shows an averaged 2x improvement, and the overall query time shows an averaged 1.4x speedup, which is less than the improvement in terms of MAE. 
These results and analysis show that our gap insertion technique can learn preciser indexes, and thus improve overall performance.

\subsubsection{Effect of Gap Insertion and Sampling}
\label{exp:effect_of_gap}
Above we analyze the overall performance of the gap insertion and sampling techniques, here we discuss the performance of various cases with specific $s$ and $\rho$, as illustrated in Figure \ref{fig:gap-sample-pgm-iot-time}. 

\begin{figure}[h]
	\centering
	\subfigure{
		\begin{minipage}[]{0.47\textwidth}
			\centering
			\includegraphics[width=1.0\linewidth]{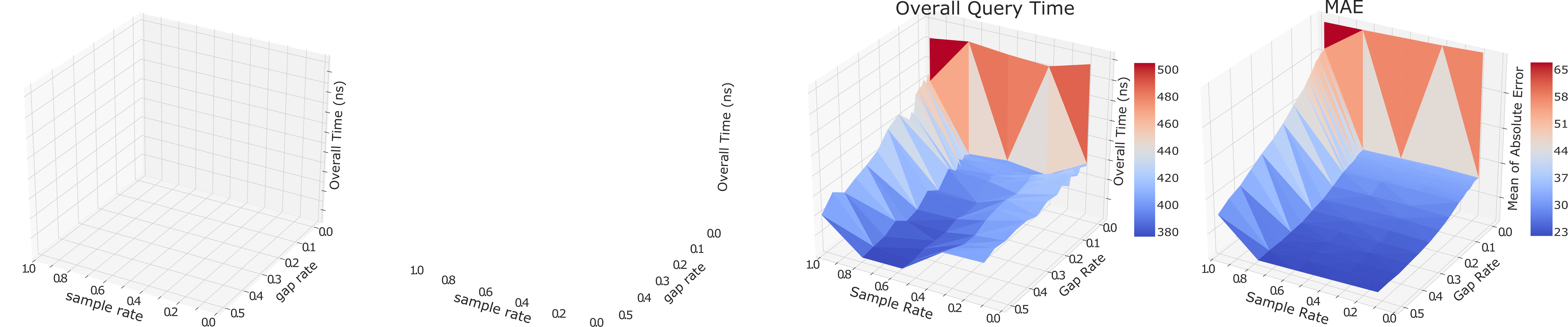}
		\end{minipage}%
	}
	\vspace{-0.2in}
	\caption{The performance of PGM with varied $s$ and $\rho$.}
	\label{fig:gap-sample-pgm-iot-time}
\end{figure}

Overall speaking, comparing with the results without gap insertion and sampling ($s=1, \rho=0$, \ie, the left upper concern in each subfigure), large gap rates and moderate sample rates achieve much smaller MAE and significant query efficiency improvements, verifying the effectiveness of proposed sampling and gap insertion techniques again.
Specifically, a larger $\rho$ allows us to transform the data distribution and enhance the patterns via the result-driven gap insertion. 
For the sampling technique, we can observe that it behaves a bit different from the results in Section \ref{exp:sample}, where both reasonable MAE and query time can be maintained as $s$ decreases from 1.0 to 0.1. Here the reasonable MAE will be maintained but overall query time will decrease firstly and then increase slightly.
This is because that when combining the sampling and gap insertion, with a very small $s$, we need to put more un-sampled keys into the linking arrays, resulting in an increased total query time.

\subsubsection{Dynamic Performance}
As mentioned in Section \ref{sec:handle-dynamic}, our proposed techniques can be easily extended to handle the dynamic scenarios. 
To show how the performance varies with different dynamic scenarios, as an example here, we evaluate PGM with dynamic linking arrays on both read-heavy and write-heavy workloads. 
Specifically, we randomly split the IoT dataset into $\data_{init}$ and $\data_{init}^-$ with write proportion $w$, \ie, $\data = \data_{init} + \data_{init}^{-}, ~~|D_{init}^{-}| = w \cdot |D|$. We choose $w=0.3$ and $w=0.7$ for read-heavy and write-heavy workloads respectively, and randomly split  $\data_{init}^-$ into $B$ equal-sized data batches $\data_{1}^{-}, \dots, D_{B}^{-}$.
We initially learn the index on $\data_{init}$ and then insert $\data_{init}^{-}$ in batches. After inserting the $b$-th batch, we evaluate its MAE, prediction time, correction time and overall query time by randomly querying the data we have seen so far.

\vspace{-0.1in}
\begin{figure}[h]
	\centering
	\subfigure{
		\begin{minipage}[]{0.48\textwidth}
			\centering
			\includegraphics[width=\textwidth]{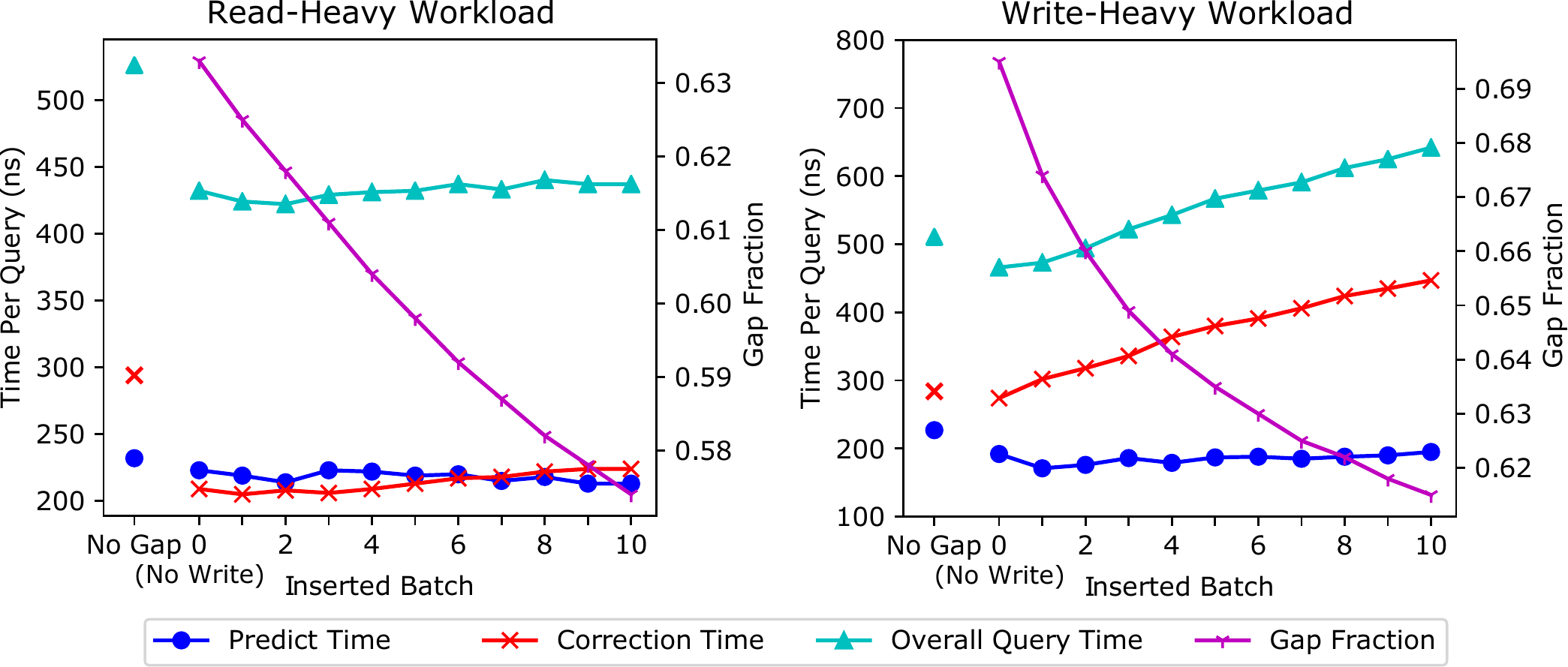}
		\end{minipage}%
	}
	\vspace{-0.15in}
	\caption{The performance of PGM with linking arrays on dynamic scenarios.}
	\vspace{-0.1in}
	\label{fig:dynamic-res}
\end{figure}

The left subfigure in Figure \ref{fig:dynamic-res} shows the querying-related time and gap fraction after each batch insertion for the read-heavy workload. As more new data are inserted into the reserved unoccupied positions, the fraction of available gaps decreases as expected. For querying-related performance, the prediction time remains the same, while the correction and overall query time slightly increase as some new data are placed into the linking arrays. 
Note that during the insertion procedure, the index with dynamic linking arrays achieves about 1.364x faster correction speed and 1.227x faster overall lookup speed than the index without gap insertion technique (the leftmost points in the left subfigure of Figure \ref{fig:dynamic-res}, and this baseline can access to all the data $\data$), showing the great potentials of the gap-based index to handle dynamic scenarios.
In the right subfigure, we also plot the results for write-heavy workload, which shows similar trends as read-heavy workload. The only difference is that the correction and overall query time increases a bit faster than the ones in ready-heavy workload, as we have more new data to insert in the write-heavy workload.
All these results confirm that the proposed method can effectively leverage the reserved gaps and maintain comparable query performance for dynamic scenarios.

\section{Related Work}
\label{sec:related}
\paragraph{Learned Indexes}. 
There have been many well-studied traditional indexes, such as tree-based \cite{art, csb, o1996log, graefe, kim2010fast, athanassoulis2014bf, bayer1977prefix}, hash-based \cite{Pagh2004cuckoo, wang2017survey} and bitmap-based \cite{bitmap1, bitmap2, Manos2016upbit, bitmap5, bitmap_perf}.
Recently, the learned indexes gain increasing interest, which learn and utilize the hidden distribution of data to be indexed.
Recursive Model Index (RMI) \cite{kraska2018case} firstly introduces the idea of predicting positions of keys with machine learning models. 
FITing-Tree \cite{galakatos2019fiting} greedily learns piece-wise linear segments with a pre-defined bounded error, and PGM \cite{Ferragina:2020pgm} further improves FITing-Tree by finding the optimal number of learned segments given an error bound. 
ALEX \cite{ding2019alex} proposes an adaptive RMI with workload-specific optimization, achieving high performance on dynamic workloads. 
RadixSpline \cite{kipf2020radixspline} gains competitive performance to RMI with a radix structure and a single-pass training.
In addition to single-dimensional indexes, existing methods also explore the multi-dimensional scenarios, such as Flood \cite{nathan2020flood}, Tsunami \cite{ding2020tsunami}, NEIST \cite{wu2019neist}  and LISA \cite{li2020lisa}.

Our work is kind of complimentary to above existing learned index methods. We propose an MDL-based learning framework to quantify the learning objective and propose two pluggable techniques that can be incorporated by existing methods to further enhances their performance as shown in experiments. 

\paragraph{Sampling in Index}.
Sampling has also been explored in partial index \cite{stonebraker1989case, seshadri1995generalized}, tail index \cite{galakatos2017revisiting}, cost estimation \cite{ding2019alex, crotty2016case} and index layout estimation \cite{lang2001modeling, nathan2020flood}.
Different from sampling a user-interested subset in partial indexes \cite{stonebraker1989case} or a rare subset in tail index \cite{galakatos2017revisiting}, we adopt a uniform sampling in our experiments.
ALEX \cite{ding2019alex} samples every $n$-th key of data to predict the cost of data node for bulk loading.
Flood \cite{nathan2020flood} trains RMI on each dimension and samples data to estimate how often certain dimensions are used. 
Different from ALEX and Flood, we propose the sampling technique for learning acceleration, and we also provide some theoretical analysis that is further confirmed by experimental results. 

\paragraph{Gapped Structure}.
To support dynamic operations, several gapped data structures have been studied to reserve gaps between elements, including Packed Memory Array (PMA) \cite{bender2007adaptive}, Packed-Memory Quadtree \cite{toss2018packed}, 
B+Tree that reserves continuous gaps at the end of data arrays, 
and ALEX \cite{ding2019alex} that adopts a gapped array with key shifts and model-based insertion.
PMA and B+Tree reserve gaps that are independent with different data distributions, while 
our method and ALEX reserve data-dependent gap using learned models. Further, different from ALEX, we use linking arrays to simplify the dynamic operations and make the key-position distribution of gap-inserted data suitable for possible inserted data.

\paragraph{Machine Learning based Database}.
We compare several learned indexes from the perspective of minimum description length \cite{grunwald2007minimum, grunwald2019minimum}, which falls in the category of machine learning based database. 
Recently, there are many works facilitating database components with machine learning, such as query optimization \cite{krishnan2018learning, marcus2019neo, marcus2018towards, ortiz2018learning, kipf2018learned, Dutt2019selectivity}, workload forecasting \cite{ma2018query}, memory prefetchers \cite{milad2018learning}, and selectivity estimation \cite{yang2019selectivity}. 
\section{Conclusion}
Learned index gains promising performance by learning and utilizing the hidden distribution of the data to be indexed.
To facilitate the learned index from the view of machine learning, we propose a minimum description length based framework that can formally quantify \textit{index learning objective} and help to design suitable learning indexes for different scenarios. 
Besides, we study two general and pluggable techniques, \ie, the sampling technique to enhance \textit{learning efficiency} with theoretical guidance, and the result-driven gap insertion technique to enhance \textit{learning effectiveness} in terms of index preciseness and generalization ability.
Extensive experiments demonstrate the efficiency and effectiveness of the proposed framework and the two pluggable techniques, which boost existing learned index methods by up to 78x construction speedup meanwhile maintaining non-degraded performance, and up to 1.59x query speedup on both static and dynamic indexing scenarios.
With this paper, we hope to provide a deeper understanding of current learned index methods from the perspective of machine learning, and promote more explorations of learned index from both the perspective of machine learning and database.

\newpage
\bibliographystyle{ACM-Reference-Format}
\bibliography{ref}

\end{document}